\newcommand{\bs}{\boldsymbol}
\newcommand{\mhat}{\hat{m}}
\newtheorem{theo}{Theorem}
\newtheorem{heur}{Heuristics}
\newtheorem{prop}[theo]{Proposition}
\theoremstyle{definition}
\newtheorem{defi}{Definition}
\newtheorem{rem}{Remark}
\begin{document}

\title{Impact of a block structure on the Lotka-Volterra model}

%
\author{Maxime Clenet$^{(1,2)}$, François Massol$^{(3)}$, Jamal Najim$^{(1)}$\thanks{Supported by CNRS Project 80 Prime | KARATE.}}
\affil{{\small (1)} CNRS and Université Gustave Eiffel, France\\{\small (2)} Université de Sherbrooke, Canada\\ {\small (3)} CNRS, Université de Lille, INSERM, CHU, Institut Pasteur Lille}
%
%
\maketitle
\begin{abstract}
The Lotka-Volterra (LV) model is a simple, robust, and versatile model used to describe large interacting systems such as food webs or microbiomes. The model consists of $n$ coupled differential equations linking the abundances of $n$ different species.
We consider a large random interaction matrix with independent entries and a block variance profile. The $i$th diagonal block represents the intra-community interaction in community $i$, while the off-diagonal blocks represent the inter-community interactions. The variance remains constant within each block, but may vary across blocks.

We investigate the important case of two communities of interacting species, study how interactions affect their respective equilibrium. We also describe equilibrium with feasibility (i.e., whether there exists an equilibrium with all species at non-zero abundances) and the existence of an attrition phenomenon (some species may vanish) within each community.

Information about the general case of $b$ communities ($b> 2$) is provided in the appendix.

\end{abstract}
\begin{raggedleft}
\textbf{Keywords:} Lotka-Volterra model, Block structure, Linear Complementarity Problems, Large Random Matrices, Stability of food webs.
\end{raggedleft}

\begin{tikzpicture}[remember picture,overlay]
\hypersetup{hidelinks}
\node[anchor=north west,yshift=450pt,xshift=-70pt]
{ \href{https://doi.org/10.24072/pci.mcb.100235}{\includegraphics[height=35mm]{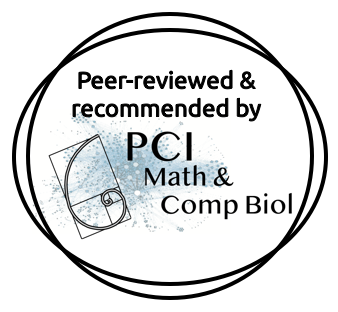}} } ;
\end{tikzpicture}

\newpage

\section{Introduction}

\paragraph{Motivations.}

Understanding large ecosystems and the underlying mechanisms that support high species diversity is a major challenge in theoretical ecology. Since the 1950's, understanding how species can stably coexist has been the focus of both theoretical \cite{macarthur_fluctuations_1955,Margalef1963,Levin1970,Roberts1974,hanski_coexistence_1983} and empirical studies \cite{Soliveres2015,saavedra_nested_2016,Leibold2017}. Motivated by the seminal work of May \cite{may_will_1972}, the introduction of random matrices has been a key mathematical step in modeling high-dimensional ecosystems \cite{allesina_stability_2012,stone_feasibility_2018,akjouj_complex_2022}. These tools have expanded our ability to understand the nature of interactions and how food webs can recover after small perturbations (stability) \cite{allesina_stability_2012,tang_correlation_2014}. In the course of the debate on the theory of species stable coexistence, a number of questions have emerged, including the following: What are the conditions which enable many species to coexist, especially regarding the structure of their interaction matrix? 

Differential equations are frequently used in ecology to describe a system of interacting species. One of the most common models is the Lotka-Volterra (LV) model \cite{lotka_elements_1925,volterra_fluctuations_1926}, which has been the subject of research in both ecology \cite{wangersky_lotka-volterra_1978, jansen_permanence_1987,law_self-assembling_1992} and mathematics \cite{goh_feasibility_1977,goh_global_1977,taylor_consistent_1988,hofbauer_evolutionary_1998,takeuchi_global_1996}. Certain properties of this model, such as its stability \cite{gibbs_effect_2018}, have raised much interest. The conditions under which all species survive, referred to as feasibility, have also motivated many works \cite{bizeul_positive_2021,grilli_feasibility_2017,stone_feasibility_2018}.

In nature, ecological networks are rather structured, and many studies have investigated the network structures that contribute to the stability of a given community \cite{thebault_stability_2010,allesina_predicting_2015}. 
One common network structure is food web compartmentalization, also known as modularity. The underlying concept is that the network is structured in the form of groups of nodes that interact more strongly within their group and more weakly between groups. A mathematical formulation of modularity was defined by Newman \cite{newman_modularity_2006}. Subsequently, modularity has been of great importance in ecology \cite{guimera_origin_2010}, in complex networks \cite{variano_networks_2004}, and in community detection (for a complete review, see Fortunato \cite{fortunato_community_2010}). 

May had already mentioned that a multi-community structure should improve stability \cite{may_will_1972}, a hypothesis later investigated by Pimm \cite{pimm_structure_1979}. In the same framework as May, Grilli \textit{et al.} \cite{grilli_modularity_2016} studied the effect of modularity on the stability of the Jacobian of a system, the so-called ``community matrix". However, studies show that modularity improves the persistence (:= non-extinction of species, generally related to their resistance to external perturbations) of species in the dynamical system \cite{stouffer_compartmentalization_2011}.

In this article, we study the Lotka-Volterra model where we consider a block structure network representing the inter- and intra-community interactions. Of particular interest are the interactions between the communities that affect their respective equilibrium and stability.

Each block is identified by its interaction strength, which is the standard deviation of the random part of the interactions. The idea that interaction strength plays a key role in the stability of ecosystems was introduced by May \cite{may_will_1972}. 
For the sake of mathematical simplicity, we limit our model to two communities, although we can extend the model to more complex food webs and multi-community frameworks. 

We study the existence and stability of an equilibrium, together with its properties. When an attrition phenomenon occurs (some species may vanish), we describe the proportion of surviving species and their distribution. We also provide conditions for which the equilibrium is feasible (i.e., whether there exists an equilibrium with all species at non-zero abundances).  

\paragraph{Known results.}

Building upon the insights gained from the model of May \cite{may_will_1972}, an understanding of the Lotka-Volterra model provides a foundation for in-depth analysis of the impact of interactions on community dynamics. Scientists from diverse disciplinary backgrounds, including mathematics, physics, and ecology, sought to investigate the intricacies of this complex ordinary differential equations system. The study of the stability of the Lotka-Volterra model has constituted a central focus of research, as evidenced by the works of Stone \cite{stone_feasibility_2018} and Gibbs \textit{et al.} \cite{gibbs_effect_2018}, which have been complemented by Clenet \textit{et al.} \cite{clenet_equilibrium_2023} where they investigated the properties of a stable equilibrium.

In fact, beyond the stability of the equilibria, the properties of these equilibria have been the subject of central interest \cite{servan_coexistence_2018,pettersson_predicting_2020, pettersson_stability_2020}, e.g. deriving the number of surviving species. Moreover, the existence of a feasible equilibrium and its stability have been demonstrated by Bizeul \textit{et al.} \cite{bizeul_positive_2021} where they establish that a threshold of interaction strength exists beyond which equilibrium of the system is almost certainly feasible. The methodology was further refined in the case of sparse interactions \cite{akjouj_feasibility_2022} and a correlation profile \cite{clenet_equilibrium_2022}. In order to gain a more comprehensive understanding of the LV model, Akjouj \textit{et al.} \cite{akjouj_complex_2022} conducted a comprehensive mathematical review of the subject.

Lotka-Volterra model provides an interesting diversity of dynamical behaviors, with partial mathematical knowledge. This is supplemented by methods from physics to improve the understanding on these various dynamical behaviors (properties of the equilibrium, out-of equilibrium dynamics, model sophistication). Bunin \cite{bunin_interaction_2016,bunin_ecological_2017} used the cavity methods to derive the properties of the surviving species and the multiple attractors phase. Barbier \textit{et al.} \cite{barbier_generic_2018} exhibits generic behaviors in complex communities.

Generating functional techniques for deriving similar mean-field equations to study the equilibrium phase in the LV system was used by Galla \cite{galla_dynamically_2018} and extended by Poley \textit{et al.} \cite{poley_generalized_2023} to study the LV model in the case of a cascade interaction matrix.

\paragraph*{Model and assumptions.}
The LV model is a standard model in ecology to study the dynamics of a community of species over time. It is defined by a system of $n$ differential equations
\begin{equation}\label{eq:LV}
\frac{dx_k}{dt}(t) = x_k(t)\, \left( r_k - x_k(t) + \sum_{\ell\in [n]} B_{k\ell} x_{\ell}(t)\right)\ ,\qquad k\in [n]=\{1,\cdots, n\}\ .
\end{equation}
The abundance of species $k$ at time $t$ is represented by $x_k(t)$ with $\bs{x} = (x_1,\cdots,x_n)$ the vector of abundances. 
Parameter $r_k$ corresponds to the growth rate of species $k$. The intraspecific parameter has been set to $1$ in accordance with the computations that were conducted in order to obtain a dimensionless LV model (for further details, please refer to Remark 2.1 in Akjouj \textit{et al.} \cite{akjouj_complex_2022}). The coefficient $B_{k\ell}$ represents the impact of species $\ell$ on species $k$. The $n\times n$ matrix $B=(B_{k\ell})$, which represents the interaction network, is decomposed into a block structure. This structure differentiates various groups of species in the form of communities that interact with each other. On the one hand, the diagonal blocks of $B$ correspond to interactions within each community, each with its own interaction strength. On the other hand, the off-diagonal blocks correspond to the impact of the communities on each other. Analytically and within the framework of two communities, the matrix $B = (B_{k\ell})_{n,n}$ is defined by blocks using random matrices $A_{ij}$ and interaction strengths $s_{ij}$:
\begin{equation}
\label{eq:mat_int}
B = \frac{1}{\sqrt{n}}\begin{pmatrix}
s_{11}A_{11} & s_{12}A_{12} \\ 
s_{21}A_{21} & s_{22}A_{22}
\end{pmatrix}\, ,
\end{equation}
where the random matrix $A_{ij}$ is non-Hermitian of size $|\mathcal{I}_i|\times|\mathcal{I}_j|$ with standard Gaussian entries i.e. $\mathcal{N}(0,1)$. Here $\mathcal{I}_1 = [n_1]$ (resp. ${\mathcal I}_2=\{n_1 +1,\cdots, n\}$), the subset of $[n]$ of size $|\mathcal{I}_1|=n_1$ (resp $|\mathcal{I}_2|=n_2$ - here and below $n=n_1+n_2$) matching the index of species belonging to community 1 (resp community 2). We define 
\begin{equation}
\label{def:beta}
\bs{\beta} = (\beta_1,\beta_2)\qquad \textrm{where}\qquad \beta_1 = \frac{n_1}n \quad \textrm{and}\quad \beta_2 = 1-\beta_1 = \frac{ n_2}n\,. 
\end{equation}
The Gaussianity assumption simplifies the explanations, but can be relaxed under certain circumstances (see the series of remarks, specifically Remarks \ref{rem:relaxed_assumptions_part1}, \ref{rem:relaxed_assumptions_part2} and \ref{rem:relaxed_assumptions_part3}, which can be found in their respective sections for further details). 

Notice a normalization parameter $1/\sqrt{n}$ in the matrix $B$. This enables the interaction matrix $B$ to have a {\it macroscopic effect} on system \eqref{eq:LV}. By macroscopic effect, we mean that even if the number of species $n$ grows to infinity, the effect of matrix $B$ in Eq. \eqref{eq:LV} remains noticeable (it does not vanish, nor does it explode). This can be illustrated by the following asymptotic properties of $B$ (hereafter $\|\cdot\|$ stands for the spectral norm):
$$
\| B\| \sim O(1)\ ;\qquad \mathbb{E}\left(\sum_{\ell\in [n]} B_{k\ell} x_{\ell}(t) \right) \sim O(1) \ ; \qquad  \mathrm{Var}\left(\sum_{\ell\in [n]} B_{k\ell} x_{\ell}(t) \right) \sim O(1) 
$$
as $n\to \infty$. These approximations are highly non trivial. The first one for instance is a generalization of the evaluation of the singular value of a matrix with i.i.d. entries and can be found in \cite{ajanki_quadratic_2019}. From an ecological perspective, this normalization has the following consequence: 
an increase in the number of species does not yield a corresponding increase in the overall strength of interactions between one species and all others, which order of magnitude remains similar.


The relative strength of interactions within and between blocks is controlled by the four $s_{ij}$ coefficients, which can be grouped together in a matrix $\bs{s}$:
$$
\bs{s} = \begin{pmatrix}
s_{11} & s_{12} \\ 
s_{21} & s_{22}
\end{pmatrix} \, .
$$
The diagonal terms $(s_{11},s_{22})$ represent the interaction strength in each community. The off-diagonal term $s_{12}$ (resp. $s_{21}$) represents the interaction strength of the impact of community $2$ on community $1$ (resp. community $1$ on community $2$). The lower the value of $s_{ij}$, the lower the rates of interaction between species.
Note that in the case of a unique community, $s$ is the interaction strength coefficient, i.e. the standard deviation of the interspecific coefficients of the LV model.

\begin{rem}
For the sake of simplicity, the results are presented in the case of two interacting communities but can be extended to the case of $b$ communities, see Appendix \ref{app:extension_b_block}.
\end{rem}

There are two scenarios of interest: Let us consider two separate groups of species that follow the dynamics described in model \eqref{eq:LV}. The matrices $A_{ij}$ are each sampled once. In the first scenario, we consider a very weak interaction between the two communities
$$
\bs{s} = \begin{pmatrix}
1/2 & \varepsilon\\ 
\varepsilon & 1/2
\end{pmatrix}, \, \varepsilon > 0 \, ,
$$
and the intra-communities interactions were selected to be relatively small (Fig. \ref{subfig:scenario1_matrix}) in order to observe that both communities' dynamics converge to a feasible equilibrium, in which all species survive (Fig. \ref{subfig:scenario1_dyn}). In the second scenario, we increase the interactions between the communities (Fig. \ref{subfig:scenario2_matrix}), i.e. the standard deviation matrix is defined by
$$
\bs{s} = \begin{pmatrix}
1/2 & 1\\ 
1 & 1/2
\end{pmatrix} \,  .
$$
In the case of a given inter-communities interaction realization, such as the one shown in Fig. \ref{fig:scenario2}, it is no longer possible for both communities to maintain the feasibility of all species. Some species are likely to disappear (Fig. \ref{subfig:scenario2_dyn}).
\begin{figure}[ht]
\centering
\begin{subfigure}[b]{0.43\textwidth}
  \centering
  \includegraphics[width=\textwidth]{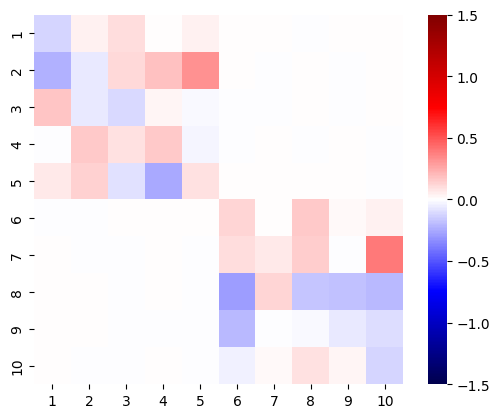}
  \caption{Interaction matrix}
  \label{subfig:scenario1_matrix}
\end{subfigure}
\hfill
\begin{subfigure}[b]{0.55\textwidth}
  \centering
  \includegraphics[width=\textwidth]{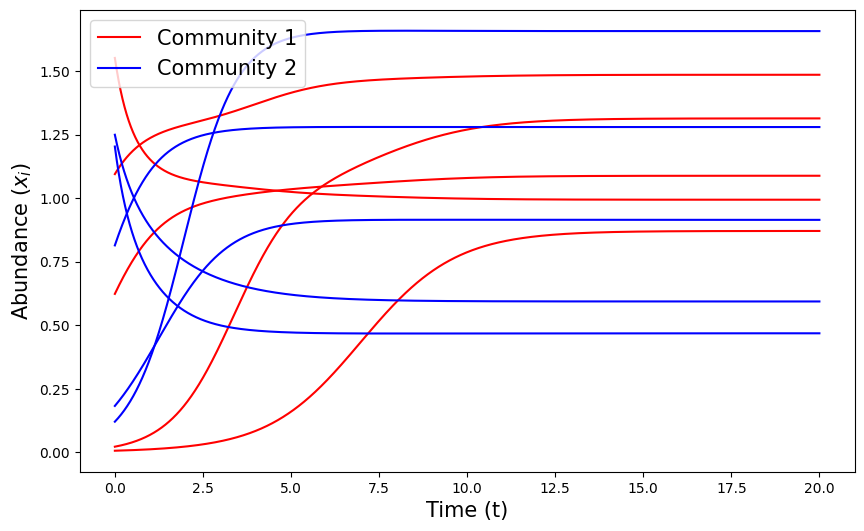}
  \caption{Abundance dynamics of two communities of five species}
  \label{subfig:scenario1_dyn}
\end{subfigure}
\caption{Dynamics of the model \eqref{eq:LV} with 2 distinct communities of 5 species each and interaction matrix given by \eqref{eq:mat_int}. The two communities converge to their feasible equilibrium point and do not interact. In Fig. (a), a heat map illustrates the interaction matrix \eqref{eq:mat_int}. Figure (b) shows the dynamics where the species of each community reach a feasible equilibrium.}
\label{fig:scenario1}
\end{figure}

\begin{figure}[ht]
\centering
\begin{subfigure}[b]{0.43\textwidth}
  \centering
  \includegraphics[width=\textwidth]{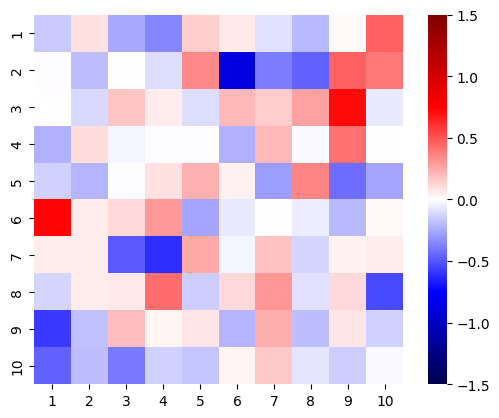}
  \caption{Interaction matrix}
  \label{subfig:scenario2_matrix}
\end{subfigure}
\hfill
\begin{subfigure}[b]{0.55\textwidth}
  \centering
  \includegraphics[width=\textwidth]{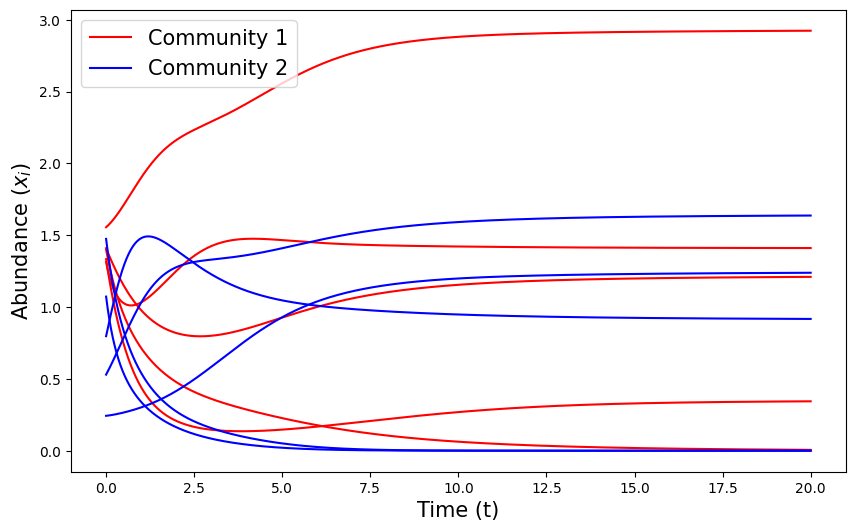}
  \caption{Abundance dynamics of two communities of five species}
    \label{subfig:scenario2_dyn}
\end{subfigure}
\caption{Dynamics of Model \eqref{eq:LV} with 2 distinct communities of 5 species each with interaction matrix given by \eqref{eq:mat_int}. In Fig. (a), representation of the interaction matrix \eqref{eq:mat_int} when the interactions between the communities are strong. Fig. (b) shows the dynamics where the species of each community reach an equilibrium. Notice that there are vanishing species in each community.}
\label{fig:scenario2}
\end{figure}

\paragraph*{Properties of the dynamical system.}
We are interested in the effect of a block structure on the food web, limit our study to the $2$-blocks case \eqref{eq:mat_int} and focus on the model with constant growth rate\footnote{The simplifying assumption $r_k = 1$ allows tractable computations and could be extended to $r_k = c$ with $c > 0$. However, if the growth rate is different for each species, the mathematical development and result may be strongly affected and will be discussed in a series of remarks, see Remarks \ref{rem:relaxed_assumptions_part1}, \ref{rem:relaxed_assumptions_part2} and \ref{rem:relaxed_assumptions_part3}.} $r_k = 1$:
\begin{equation}\label{eq:LV2}
\frac{dx_k}{dt} = x_k\, \left( 1 -  x_k + (B\boldsymbol{x})_k \right)\ ,\quad k\in [n] \, .
\end{equation}
Of major interest is the existence and uniqueness of an equilibrium $\bs{x^*}=(x^*_k)_{k\in[n]}$. The LV system is an autonomous differential system. If the initial conditions are positive i.e. $\bs{x}(0)>0$ (componentwise), it implies $\bs{x}(t)>0$ for every $t>0$. However, some of the components $x_k(t)$ may converge to zero if the equilibrium $\bs{x}^*$ has components equal to zero. 
An equilibrium to the LV system should hence satisfy the following set of constraints:
\begin{equation}\label{eq:equilibrium}
\begin{cases}
 & x^*_k\left( 1 -  x^*_k + (B\boldsymbol{x^*})_k \right) =0\,,\quad \forall k\in [n]\, ,\\ 
 & x^*_k \geq 0\,.
\end{cases}
\end{equation}
Two substantially different situations arise, that we will study hereafter.  

First, if $\bs{x}^*$ has vanishing components, the equilibrium equations are cast into a nonlinear optimization problem, which has been studied by Clenet \textit{et al.} \cite{clenet_equilibrium_2023} in the case of a single community.

If the equilibrium is feasible, that is $\bs{x}^*>0$, then the equilibrium set of equations becomes a linear equation: 
\begin{equation}
\label{eq:linear_feasibility}
\bs{x}^* = \bs{1}+B  \bs{x}^*\, .
\end{equation}
In the context of a single community, the existence of a positive solution has been studied by Bizeul and Najim \cite{bizeul_positive_2021} and extended for more complex food webs in \cite{akjouj_feasibility_2022,clenet_equilibrium_2022, liu_feasibility_2023}.

A further consideration which will be addressed is whether the equilibrium $\bs{x^*}$ is asymptotically globally stable, i.e. if for every initial vector $\bs{x}_0>0$ the solution of \eqref{eq:LV2}, which starts at $\bs{x}(0)=\bs{x}_0$, satisfies$$
\bs{x}(t)\xrightarrow[t\to\infty]{} \bs{x}^*\, .
$$
In the sequel, the term ``stability" will refer to ``asymptotic stability".

\paragraph*{Outline of the article.}
In Section \ref{sec:exist}, we describe sufficient conditions for the existence and uniqueness of a stable equilibrium in the model \eqref{eq:LV2}, see Theorem \ref{th:block_unicite_centered}. Section \ref{sec:surviving_species} is devoted to the study of the properties of the species that survive in each of the communities described by two heuristics. Heuristics \ref{heur:properties_surviving_species} specify the properties of the surviving species and Heuristics \ref{heur:block_distribution} define the distribution of the surviving species.   Finally, in Section \ref{sec:fea} we provide conditions under which the equilibrium is feasible, 
see Theorem \ref{th:fea_2b}.

\section{Existence of a unique equilibrium}\label{sec:exist}

In Figures \ref{fig:scenario1} and \ref{fig:scenario2}, we notice that for different interaction coefficients $\bs{s}$, the system converges to an equilibrium (with or without vanishing species). Theorem \ref{th:block_unicite_centered} below will provide the adequate theoretical framework.

\subsection{Theoretical background}

\paragraph*{Non-invadability condition.}
The research of equilibrium points of \eqref{eq:LV2} is equivalent to the identification of solutions of system \eqref{eq:equilibrium}. However, the number of potential solutions can be extremely large. In order for the equilibrium $\bs{x}^*$ to be stable, there exists a necessary condition, known in ecology as the non-invadability condition \cite{law_permanence_1996}, namely that 
\begin{equation}
\label{eq:block_non_inva_1}
1-x^*_k+(B  \bs{x}^*)_k\le 0\,,\quad \forall k\in [n] .
\end{equation}
In model \eqref{eq:LV2}, the non-invadability condition for a given species whose values at equilibrium is $x_k^*=0$ is equivalent to
\begin{equation}
    \label{eq:block_non_inva_2}
    \left( \frac{1}{x_k} \frac{dx_k}{dt}\right)_{x_k \rightarrow 0^+}\le 0\, .
\end{equation}
Condition \eqref{eq:block_non_inva_2} describes the fact that if we add a species to the system at a very low abundance, it will not be able to invade the system.
As a consequence, the number of possible solutions should solve the following set of constraints:
\begin{equation}\label{eq:block_equilibrium-NI}
\left\{
\begin{array}{lccl}
 x^*_k\left( 1 -  x^*_k + (B\boldsymbol{x^*})_k \right) &=&0&\textrm{for}\ k\in [n]\, ,\\ 
  1 -  x^*_k + (B\boldsymbol{x^*})_k  &\le& 0&\textrm{for}\ k\in [n]\, ,\\
 \bs{x}^* &\geq& 0 & \textrm{componentwise}\, .\\
 \end{array}
\right.
\end{equation}

This casts the search of a nonnegative equilibrium problem into the class of linear complementarity problems (LCP). For a reminder of the definition of an LCP problem, see for instance \cite{clenet_equilibrium_2023}. In the following, we recall the main Theorem for proving the existence and uniqueness of a single equilibrium.

\paragraph{The equilibrium $\bs{x}^*$ and its stability.}

Let $X^\top$ be the transpose of the matrix $X$.

\begin{defi}[Lyapunov diagonal stability]
\label{defi:diag-stable}
A matrix $M$ is called Lyapunov diagonally stable, denoted by $M \in \mathcal{D}$, if and only if there exists a diagonal matrix $D$ with positive diagonal elements such that $DM+M^{\top}D$ is negative definite, i.e. all eigenvalues are negative.
\end{defi}

This class of matrix was already mentioned in Volterra's historical paper \cite{volterra_cons_1931} and in Logofet's book \cite[Chap. 4]{logofet_matrices_1993}, in relation with the stability of LV models.

Recall System \eqref{eq:LV} with different growth rates for each species and consider matrix $B$ is arbitrary,
\begin{equation}\label{eq:generic-LV}
\frac{d\, y_k}{dt} =y_k(r_k +((- I+B)\bs{y})_k)\, ,\quad k\in [n]\, \, .
\end{equation}
The LCP associated with \eqref{eq:generic-LV} is as follows
\begin{equation}\label{eq:lcp_takeuchi}
\left\{
\begin{array}{lccl}
 y^*_k\left( r_k -  y^*_k + (B\boldsymbol{y^*})_k \right) &=&0&\textrm{for}\ k\in [n]\, ,\\ 
  r_k -   y^*_k + (B\boldsymbol{y^*})_k  &\le& 0&\textrm{for}\ k\in [n]\, ,\\
 \bs{y}^* &\geq& 0 & \textrm{componentwise}\, .\\
 \end{array}
\right.
\end{equation}

Takeuchi and Adachi \cite[Th. 3.2.1]{takeuchi_global_1996} establish a criterion for the existence of a unique globally stable equilibrium $\bs{y}^*$ of \eqref{eq:generic-LV}. 

\begin{theo}[Takeuchi and Adachi \cite{takeuchi_existence_1980}] \label{th:block_takeuchi} If $- I+B \in \mathcal{D}$ (see definition \ref{defi:diag-stable}), then System \eqref{eq:lcp_takeuchi} admits a unique solution. In particular, for every $\bs{r}\in \mathbb{R}^n$, there is a unique equilibrium $\bs{y}^*$ to \eqref{eq:generic-LV}, which is globally stable in the sense that for every $\bs{y}_0>0$, the solution to \eqref{eq:generic-LV} which starts at $\bs{y}(0)=\bs{y}_0$ satisfies:
$
\bs{y}(t)\xrightarrow[t\to\infty]{} \bs{y}^*
$.
\end{theo}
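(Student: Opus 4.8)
The plan is to read \eqref{eq:lcp_takeuchi} as a linear complementarity problem (LCP), deduce existence and uniqueness of its solution from a P-matrix property forced by the hypothesis, and then establish global stability with a Volterra--Lyapunov function built from the same diagonal matrix $D$.

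\textbf{LCP reformulation and the P-matrix property.} Setting $\bs z=\bs y^*$ and $\bs w=-\bs r+(I-B)\bs z$, the three lines of \eqref{eq:lcp_takeuchi} say exactly $\bs w\ge 0$, $\bs z\ge 0$, $\bs z^{\top}\bs w=0$, i.e. $\bs z$ solves $\mathrm{LCP}(q,M)$ with $q=-\bs r$ and $M=I-B=-(-I+B)$. By the classical theorem of Samelson--Thrall--Wesler (see e.g. Cottle--Pang--Stone, \emph{The Linear Complementarity Problem}), $\mathrm{LCP}(q,M)$ has a unique solution for \emph{every} right-hand side $q$ if and only if $M$ is a P-matrix, so the first assertion reduces to showing $M=I-B$ is a P-matrix. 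Let $D=\mathrm{diag}(d_1,\dots,d_n)$ with $d_k>0$ be such that $D(-I+B)+(-I+B)^{\top}D$ is negative definite, equivalently $DM+M^{\top}D$ positive definite. For $t\in[0,1]$ put $M_t=(1-t)I+tM$; then $DM_t+M_t^{\top}D=2(1-t)D+t\bigl(DM+M^{\top}D\bigr)$ is a convex combination of positive definite matrices, hence positive definite, so each $M_t$ is nonsingular (from $M_t\bs u=0$ one gets $\bs u^{\top}(DM_t+M_t^{\top}D)\bs u=0$, i.e. $\bs u=0$). Thus $t\mapsto\det M_t$ is continuous, non-vanishing and equal to $1$ at $t=0$, so $\det M>0$; applying the same argument to every principal submatrix (restricting $D$ to the relevant indices) shows all principal minors of $M$ are positive. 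Hence $M$ is a P-matrix and \eqref{eq:lcp_takeuchi} has a unique solution $\bs y^*$, which is an equilibrium of \eqref{eq:generic-LV}.

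\textbf{Global stability.} Let $S=\{k:\ y^*_k>0\}$ and define, on the open positive orthant,
\[
V(\bs y)=\sum_{k\in S}d_k\Bigl(y_k-y^*_k-y^*_k\ln\tfrac{y_k}{y^*_k}\Bigr)+\sum_{k\notin S}d_k\,y_k ,
\]
which is nonnegative, vanishes only at $\bs y^*$, and has bounded sublevel sets. Writing $g_k(\bs y)=r_k-y_k+(B\bs y)_k$ and using $\dot y_k=y_k g_k(\bs y)$ one gets $\dot V=\sum_{k\in S}d_k(y_k-y^*_k)g_k(\bs y)+\sum_{k\notin S}d_k y_k g_k(\bs y)$. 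Invoking the LCP relations $g_k(\bs y^*)=0$ for $k\in S$, $g_k(\bs y^*)\le 0$ for $k\notin S$, $y^*_k=0$ for $k\notin S$, together with $g_k(\bs y)-g_k(\bs y^*)=\bigl((-I+B)(\bs y-\bs y^*)\bigr)_k$, the two sums combine into
\[
\dot V=\tfrac12\,(\bs y-\bs y^*)^{\top}\bigl(D(-I+B)+(-I+B)^{\top}D\bigr)(\bs y-\bs y^*)+\sum_{k\notin S}d_k\,y_k\,g_k(\bs y^*).
\]
The quadratic form is $\le 0$ and vanishes only at $\bs y=\bs y^*$; each term $d_k y_k g_k(\bs y^*)$ with $k\notin S$ is $\le 0$ and also vanishes at $\bs y^*$ (there $y_k=0$). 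So $V$ is a strict Lyapunov function on the positive orthant, which is forward invariant (cf. the discussion after \eqref{eq:equilibrium}); boundedness of trajectories plus LaSalle's invariance principle then give $\bs y(t)\to\bs y^*$ for every $\bs y(0)>0$, and global attraction forces $\bs y^*$ to be the only equilibrium in $\{\bs y>0\}$.

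\textbf{Main obstacle.} The delicate step is the identity for $\dot V$ and, within it, the handling of the extinct coordinates $k\notin S$: there $V$ is only linear, so the cross-terms between surviving and extinct species must be reorganized using $y^*_k=0$ and complementary slackness $y^*_k g_k(\bs y^*)=0$, and one must exploit the negative definiteness of $D(-I+B)+(-I+B)^{\top}D$ on all of $\mathbb{R}^n$ so that $\dot V=0$ implies $\bs y=\bs y^*$ and not merely that the surviving components coincide. A secondary technical point is the compactness/forward-invariance bookkeeping required to apply LaSalle, since $V$ remains finite on the faces $\{y_k=0\}$ with $k\notin S$.
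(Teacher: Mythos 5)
The paper does not prove this theorem: it is imported verbatim from Takeuchi and Adachi (cited as \cite{takeuchi_existence_1980} and as Th.~3.2.1 of \cite{takeuchi_global_1996}), so there is no internal proof to compare against. Your reconstruction is correct and is essentially the classical argument behind that result: Lyapunov diagonal stability of $-I+B$ forces $M=I-B$ to be a P-matrix (your homotopy $M_t$ applied to every principal submatrix is a standard way to see this), the Samelson--Thrall--Wesler theorem then gives existence and uniqueness of the LCP solution for every $\bs r$, and the Goh/Volterra function $V(\bs y)=\sum_k d_k\bigl(y_k-y^*_k-y^*_k\ln(y_k/y^*_k)\bigr)$ (with the linear convention on extinct coordinates) yields $\dot V=\tfrac12(\bs y-\bs y^*)^{\top}\bigl(D(-I+B)+(-I+B)^{\top}D\bigr)(\bs y-\bs y^*)+\sum_{k\notin S}d_ky_kg_k(\bs y^*)<0$ off $\bs y^*$. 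Two cosmetic points: $2(1-t)D+t(DM+M^{\top}D)$ is a nonnegative (not convex) combination of positive definite matrices, which is all you need; and in the LaSalle step it is worth noting explicitly that the $\omega$-limit set cannot meet $\{y_k=0\}$ for $k\in S$ because $V\to\infty$ there while $V$ is decreasing, so the closure on which you apply invariance is compact and the identity for $\dot V$ remains valid on it.
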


\subsection{Sufficient condition for the existence of an equilibrium in model \eqref{eq:LV2}}

As we rely on Theorem \ref{th:block_takeuchi} to assert the existence of a unique equilibrium, we need to understand the asymptotic behaviour of the largest eigenvalue of matrix $H=B+B^\top$. In this quest, the Stieltjes transform $g_\mu$ of a probability measure $\mu$ on the real line 
$$
g_{\mu}(z) =\int \frac{d\mu(\lambda)}{\lambda-z}\,,\quad z\in \mathbb{C}^+=\{ z\in \mathbb{C},\ \textrm{im}(z)>0\}
$$
is a well-known device in Random Matrix Theory since Marchenko and Pastur's groundbreaking paper \cite{marcenko_distribution_1967}. A key feature of the Stieltjes transform is that one can reconstruct probability measure 
$\mu$ knowing $g_{\mu}$, see Proposition \ref{prop:block_stielt_inv}.
We recall some of the properties of the Stieltjes transform in Appendix \ref{app:stieljes} (for more details, see \cite{bai_spectral_2010}). 

In the proof of Theorem \ref{th:block_unicite_centered}, we shall use the fact, established in \cite{ajanki_quadratic_2019}, that the empirical measure of the eigenvalues of matrix $H$:
$$
\frac 1n \sum_{i=1}^n \delta_{\lambda_i(H)}
$$
is well approximated by a fully deterministic distribution whose Stieltjes transform 
$
m(z) =\frac 1n \sum_{i=1}^n m_i(z)$
is defined via the vector $\bs{m}(z) = (m_i(z))$ which is the unique solution of the equation 
$$
-\frac{1}{\bs{m}(z)} = z+S\bs{m}(z)\, ,
$$
carefully defined in the proof of Theorem \ref{th:block_unicite_centered} hereafter, see Eq. \eqref{eq:QVE}.

For a wide range of parameters $(\bs{\beta},\bs{s})$ associated to matrix model $B$, we aim to ensure the existence of a globally stable equilibrium $\bs{x}^*$ of \eqref{eq:LV2} associated to LCP \eqref{eq:block_equilibrium-NI}. Denote by $\left \| \bs{x} \right \|_{\infty}$ the sup norm of a vector and by $\left \| X \right \|_{\infty}$ its induced operator norm, i.e.
$$
\left \| \bs{x} \right \|_{\infty} = \underset{k \in [n]}{\max} \, |x_k|\qquad \textrm{and}\qquad \left \| X \right \|_{\infty} = \underset{k \in [n]}{\max} \sum_{\ell =1}^n |X_{k\ell}|\, .
$$ 
Let $X,Y$ be matrices of the same size, then $X\circ Y$ is their Hadamard product i.e. $(X \circ Y)_{ij} = X_{ij}Y_{ij}$ and consider
$$
\bs{s}\circ \bs{s} = \begin{pmatrix}
s_{11}^2 & s_{12}^2 \\ 
s_{21}^2 & s_{22}^2
\end{pmatrix} \, .
$$

\begin{theo}\label{th:block_unicite_centered} Recall the definition of $B$ in \eqref{eq:mat_int} and $\bs{\beta}$ in \eqref{def:beta} and assume that \begin{equation*}
\left \|\mathrm{diag}(\bs{\beta}) \left((\bs{s} \circ \bs{s})+(\bs{s} \circ \bs{s})^{\top}\right) \right \|_\infty < 1 \, ,
\end{equation*}then a.s. matrix 
$
(I-B)+(I-B)^{\top}
$
is eventually positive definite: with probability one, there exists $N$ depending on matrix $B$'s realization such that for $n\ge N$, $(I-B)+(I-B)^\top$ is positive definite. In particular, $-I+B\in {\mathcal D}$. There exists a unique vector solution to the LCP \eqref{eq:block_equilibrium-NI}. This vector $\bs{x}^*$ is the unique (random) globally stable equilibrium of \eqref{eq:LV2}.
\end{theo}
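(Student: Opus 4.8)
The plan is to reduce the statement to a single spectral estimate: that almost surely $\lambda_{\max}(B+B^\top) < 2$ for all large $n$. Writing $H := B + B^\top$, one has $(I-B)+(I-B)^\top = 2I - H$, whose eigenvalues are $2-\lambda_i(H)$, so it is positive definite exactly when $\lambda_{\max}(H) < 2$; and once this holds one may take $D = I$ in Definition \ref{defi:diag-stable}, since then $D(-I+B)+(-I+B)^\top D = -(2I-H)$ is negative definite, so $-I+B \in \mathcal D$. Theorem \ref{th:block_takeuchi} applied with $\bs{r} = \bs{1}$ then yields at once the existence and uniqueness of a solution to the LCP \eqref{eq:block_equilibrium-NI} — which is exactly system \eqref{eq:lcp_takeuchi} for $r_k \equiv 1$ — together with the fact that this vector is the unique globally stable equilibrium of \eqref{eq:LV2}. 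So the entire proof reduces to the spectral estimate.

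To obtain it, I would first observe that $H$ is a symmetric Gaussian random matrix with a block variance profile: since the $A_{ij}$ are independent with i.i.d.\ $\mathcal{N}(0,1)$ entries, for $k\in\mathcal{I}_i$, $\ell\in\mathcal{I}_j$, $k\ne\ell$, the entry $H_{k\ell} = B_{k\ell}+B_{\ell k}$ is centered Gaussian of variance $\tfrac1n(s_{ij}^2+s_{ji}^2)$, while the diagonal of $H$ contributes only an $O(\sqrt{\log n/n})$ perturbation in operator norm. Let $S$ denote the associated $n\times n$ variance-profile matrix, $S_{k\ell} = \tfrac1n(s_{ij}^2+s_{ji}^2)$; this is the matrix entering the quadratic vector equation \eqref{eq:QVE}. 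By \cite{ajanki_quadratic_2019} its unique solution $\bs{m}(z)$ gives the deterministic equivalent $\mu$ of the empirical spectral distribution of $H$, and — this part must be invoked with care — the accompanying local law rules out eigenvalues of $H$ outside any fixed neighbourhood of $\mathrm{supp}(\mu)$ for $n$ large. Hence it suffices to prove $E_+ := \sup\mathrm{supp}(\mu) < 2$.

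For the edge I would exploit that $S$ is constant on blocks, so $\bs{m}(z)$ takes only two values $m_1(z), m_2(z)$ and \eqref{eq:QVE} collapses to the $2\times2$ system $-1/m_i(z) = z + \sum_j \beta_j(s_{ij}^2+s_{ji}^2)\,m_j(z)$. Studying this along the real axis for $z = E \ge 2$ — via a fixed-point/monotonicity analysis of the map $u_i \mapsto \big(2 - \sum_j \beta_j(s_{ij}^2+s_{ji}^2)\,u_j\big)^{-1}$ and of the associated stability operator $\mathrm{diag}(m_i^2)\big(\beta_j(s_{ij}^2+s_{ji}^2)\big)_{ij}$ — one checks that $E = 2$ lies strictly to the right of $\mathrm{supp}(\mu)$. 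This is precisely where the hypothesis enters: the block-constancy of $S$ makes $\|S\|$ equal to the spectral radius of the $2\times2$ matrix $\mathrm{diag}(\bs{\beta})\big((\bs{s}\circ\bs{s})+(\bs{s}\circ\bs{s})^\top\big)$ (up to $\rho(AB)=\rho(BA)$), which is at most its operator $\infty$-norm, so the assumption forces $\|S\| < 1$; and the slightly stronger quantitative content of $\big\|\mathrm{diag}(\bs{\beta})\big((\bs{s}\circ\bs{s})+(\bs{s}\circ\bs{s})^\top\big)\big\|_\infty < 1$ is what keeps the stability operator's spectral radius below $1$ at $E = 2$. Finally, to get the conclusion in the stated form — a single random $N$ beyond which positive definiteness holds — I would combine this with Gaussian concentration: $\lambda_{\max}(H)$ is an $O(n^{-1/2})$-Lipschitz function of the i.i.d.\ Gaussian entries, so $\mathbb{P}\big(\lambda_{\max}(H) \ge \mathbb{E}\lambda_{\max}(H) + t\big) \le e^{-cnt^2}$, while $\mathbb{E}\lambda_{\max}(H) \to E_+ < 2$; a Borel--Cantelli argument then closes the proof.

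The routine parts are the algebraic reduction above and the appeal to Theorem \ref{th:block_takeuchi}. The hard part lies in the edge analysis, in two places: (i) showing from the block quadratic vector equation that the hypothesis on $(\bs{\beta},\bs{s})$ genuinely pushes $E_+$ below $2$ — an elementary but delicate study of a $2\times2$ nonlinear system and of its stability operator, where one must identify exactly how $\big\|\mathrm{diag}(\bs{\beta})\big((\bs{s}\circ\bs{s})+(\bs{s}\circ\bs{s})^\top\big)\big\|_\infty < 1$ produces the required bound on the $m_i(E)$; and (ii) upgrading convergence of the bulk spectrum to the statement that \emph{no} eigenvalue of $H$ reaches $2$ for all large $n$ at once, which needs either the no-outliers/edge-rigidity input from \cite{ajanki_quadratic_2019} for variance-profile matrices or the concentration argument sketched above. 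I would also check that the Gaussianity actually used (entrywise independence and Gaussian tails of the $A_{ij}$) matches exactly what these two inputs require.
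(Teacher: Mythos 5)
Your overall architecture is exactly that of the paper: write $(I-B)+(I-B)^\top = 2I-H$ with $H=B+B^\top$, reduce everything to $\lambda_{\max}(H)<2$, take $D=I$ in Definition \ref{defi:diag-stable}, and conclude via Theorem \ref{th:block_takeuchi} with $\bs{r}=\bs{1}$; the identification of $H$ as a symmetric Gaussian matrix with block variance profile $S_{k\ell}=\tfrac1n(s_{ij}^2+s_{ji}^2)$ is also the same. The one genuine divergence is in how the edge is controlled. You propose to locate $E_+=\sup\mathrm{supp}(\mu)$ by a bespoke fixed-point and stability-operator analysis of the reduced $2\times2$ QVE, and you flag this as the hard, delicate part. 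The paper bypasses this entirely: Theorem 2.1 of \cite{ajanki_quadratic_2019} already states that the measure associated with the QVE is supported in $[-\Sigma,\Sigma]$ with $\Sigma=2\left\|S\right\|_\infty^{1/2}$, so the condition $\lambda_{\max}(H)<2$ follows at once from $\left\|S\right\|_\infty<1$, and the only computation left is the elementary reduction $\left\|S\right\|_\infty=\left\|\mathrm{diag}(\bs{\beta})\left((\bs{s}\circ\bs{s})+(\bs{s}\circ\bs{s})^\top\right)\right\|_\infty$, which is exactly the hypothesis. So your item (i) is unnecessary work (though if carried out it could in principle yield a sharper, non-sufficient-only threshold), and your detour through the spectral radius of the $2\times2$ matrix is not needed either — the cited bound is already phrased in terms of the induced $\infty$-norm. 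Your item (ii) is a real issue for both proofs: passing from convergence of the spectral measure to the almost-sure absence of eigenvalues above $2\left\|S\right\|_\infty^{1/2}+\varepsilon$ does require a no-outliers/rigidity input or a concentration-plus-Borel--Cantelli argument of the kind you sketch; the paper asserts this step with the same reliance on \cite{ajanki_quadratic_2019} and does not spell it out, so your concentration argument is a legitimate (and slightly more self-contained) way to close that gap.
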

\begin{rem}
\label{rem:relaxed_assumptions_part1}
Theorem \ref{th:block_unicite_centered} can be extended in two directions. The Gaussianity assumption can be relaxed to any reasonable distribution with finite second moment, and growth rates different from one i.e. $r_k \neq 1$ can be considered (see for instance \cite[Sections 4.2 and 4.3]{bizeul_positive_2021} in the context of a single community).
\end{rem}

\begin{proof}[Sketch of proof] From Theorem \ref{th:block_takeuchi}, we need to verify the Lyapunov diagonally stable condition of the matrix $(-I+B)$ by analyzing its largest eigenvalue
\begin{equation*}
    (-I+B)+(-I+B^{\top}) = -2I+\frac{1}{\sqrt{n}}\begin{pmatrix}
s_{11}(A_{11}+A_{11}^{\top}) & s_{12}A_{12}+s_{21}A_{21}^{\top} \\ 
s_{21}A_{21}+s_{12}A_{12}^{\top} & s_{22}(A_{22}+A_{22}^{\top})
\end{pmatrix}\, .
\end{equation*}
Denote by $H$ the symmetric matrix
\begin{equation*}
    H = \frac{1}{\sqrt{n}}\begin{pmatrix}
H_{11} & H_{12} \\ 
H_{21} & H_{22}
\end{pmatrix} = \frac{1}{\sqrt{n}}\begin{pmatrix}
s_{11}(A_{11}+A_{11}^{\top}) & s_{12}A_{12}+s_{21}A_{21}^{\top} \\ 
s_{21}A_{21}+s_{12}A_{12}^{\top} & s_{22}(A_{22}+A_{22}^{\top})
\end{pmatrix} \, ,
\end{equation*}
where  $H_{ij}$ is a matrix of size $|\mathcal{I}_i|\times|\mathcal{I}_j|$ and each off-diagonal entries follow a Gaussian distribution $\mathcal{N}\left(0,s_{ij}^2+s_{ji}^2\right)$ for all $i,j \in \{1,2\}.$

A matrix is negative definite if and only if all its eigenvalues are negative. Note here that $-2I+H$ is negative definite if and only if the upper eigenvalue of $H$ is less than $2$. The goal of the proof is to give a condition on the parameter $\bs{s}$ such that 
$$
\lambda_{\max} \left ( H \right ) < 2 \, .
$$

Matrix $H$ has a variance profile and such a model has been studied in great details by Erdös \textit{et al.} in \cite{ajanki_universality_2017, ajanki_quadratic_2019}. In particular many properties of its spectrum are associated to properties of deterministic equations named Quadratic Vector Equations (QVE) involving Stieltjes transforms associated to the empirical distribution of the eigenvalues of $H$. These equations have $n$ unknown quantities depending on $z\in \mathbb{C}^+$, stacked into a vector 
$$
\bs{m}(z) = (m_1(z),\cdots,m_n(z))\ ,
$$
and related by the following system of equations (QVE):
\begin{equation*}
\begin{split}
    k \in \mathcal{I}_1 \quad,\quad -\frac{1}{m_k(z)} &= z + \sum_{\ell \in \mathcal{I}_1}\frac{2s_{11}^2}{n} m_\ell(z) + \sum_{\ell \in\mathcal{I}_2}\frac{1}{n}\left(s_{12}^2+s_{21}^2\right)  m_\ell(z)\, , \\
    k \in \mathcal{I}_2 \quad ,\quad  -\frac{1}{m_k(z)} &= z + \sum_{\ell \in \mathcal{I}_1}\frac{1}{n}\left(s_{12}^2+s_{21}^2\right)  m_\ell(z) + \sum_{\ell \in \mathcal{I}_2}\frac{2s_{22}^2}{n} m_\ell(z)\,.
\end{split}
\end{equation*}
Denote by $1/\bs{m}(z) = (1/m_1(z),\cdots,1/m_n(z))^\top$, $\bs{1}_{\mathcal{I}_i}$ a vector whose entries are 1's of size $|\mathcal{I}_i|$ and
$$S = \frac{1}{n}\begin{pmatrix}
2 s_{11}^2\bs{1}_{\mathcal{I}_1}\bs{1}_{\mathcal{I}_1}^{\top} & (s_{12}^2+s_{21}^2)\bs{1}_{\mathcal{I}_1}\bs{1}_{\mathcal{I}_2}^{\top} \\ 
(s_{12}^2+s_{21}^2)\bs{1}_{\mathcal{I}_2}\bs{1}_{\mathcal{I}_1}^{\top} & 2s_{22}^2\bs{1}_{\mathcal{I}_2}\bs{1}_{\mathcal{I}_2}^{\top} 
\end{pmatrix}\, ,$$
the QVE equations can be written in the more compact form
\begin{equation}
    -\frac{1}{\bs{m}(z)} = z+S\bs{m}(z)\, .
    \label{eq:QVE}
\end{equation}
Following Theorem 2.1 in Ajanki \textit{et al.} \cite{ajanki_quadratic_2019}, for all $z \in \mathbb{C}^+$, Equation \eqref{eq:QVE} has a unique solution $\bs{m} = \bs{m}(z)$ and 
$$
\frac 1n \sum_{i=1}^n m_i(z)
$$ is the Stieltjes transform of a probability measure, the support of which is included in $[-\Sigma,\Sigma]$, where $\Sigma = 2\left \| S \right \|_\infty^{1/2}$. 

This information gives an asymptotic bound on the support of the matrix $H$ associated with \eqref{eq:QVE}, i.e. asymptotically $\forall \, \varepsilon >0$ there exists $N$ depending on matrix $B$'s realization, such that for $n \geq N$
$$
\lambda_{\max}\left( H \right ) \leq 2\left \| S \right \|_\infty^{1/2}+\varepsilon\, .
$$ 
Recall that $-2I+H$ is negative definite iff $\lambda_{\max}\left( H \right ) < 2$. This condition is fulfilled if 
\begin{equation*}
2\left \| S \right \|_\infty^{1/2} < 2 \, ,
\end{equation*}
or equivalently
$$
\left \| S \right \|_\infty < 1 \, . 
$$
Note that this condition is sufficient but not necessary. Given the particular shape of the matrix $S$, computing its norm is equivalent to computing the norm of a matrix of size $2\times 2$
$$
\left \| S \right \|_\infty = \left \|\mathrm{diag}(\bs{\beta}) \left((\bs{s}\circ\bs{s})+(\bs{s}\circ\bs{s})^{\top}\right) \right \|_\infty \, 
= \left \| \begin{pmatrix}
\beta_1 & 0\\ 
0 & \beta_2
\end{pmatrix}\begin{pmatrix}
2s_{11}^2 & s_{12}^2+s^2_{21}\\
s^2_{12}+s^2_{21} & 2s_{22}^2
\end{pmatrix} \right \|_\infty\,, 
$$
which completes the proof. We can then rely on Theorem \ref{th:block_takeuchi} to conclude.
\end{proof}

\begin{rem}
~
\begin{enumerate}
    \item In the context of a unique community, suppose that $\bs{s} = s\bs{1}\bs{1}^{\top}$, then the previous condition takes the simpler form $s < 1/\sqrt{2}$, a condition already mentionned in \cite{clenet_equilibrium_2023}. Indeed, starting from the condition of Theorem \ref{th:block_unicite_centered}, the condition on the matrix is
    $$
    \left \|\begin{pmatrix}
1/2 & 0\\ 
0 & 1/2
\end{pmatrix}^{1/2} \begin{pmatrix}
2s^2 & 2s^2\\ 
2s^2 & 2s^2
\end{pmatrix}\begin{pmatrix}
1/2 & 0\\ 
0 & 1/2
\end{pmatrix}^{1/2} \right \|_\infty = \left \| \begin{pmatrix}
1s^2 & 1s^2\\ 
1s^2 & 1s^2
\end{pmatrix} \right \|_\infty = 2s^2 \, ,
$$
The same sufficient condition is obtained $2s^2 < 1 \ \Leftrightarrow \ s <1/\sqrt{2}$.
\item The condition given in Theorem \ref{th:block_unicite_centered} is sufficient to guarantee a stable unique solution to LCP \eqref{eq:block_equilibrium-NI} but not necessary. Even in the single community case, finding optimal thresholds remains an open question (see \cite[Table 1]{akjouj_complex_2022}).
\end{enumerate}
\end{rem}

\section{Surviving species}
\label{sec:surviving_species}
In Section \ref{sec:exist}, we have given conditions on matrix $\bs{s}$ and on $\bs{\beta}=(\beta_1,\beta_2)$ for the existence of a globally stable equilibrium
$\bs{x}^*$ to \eqref{eq:LV2} under the non-invadability condition. The equilibrium vector $\bs{x}^*$ is random and depends on the realization of matrix $B$. Moreover since $\bs{s}$ has fixed components and does not depend on $n$, the equilibrium $\bs{x}^*$ will feature vanishing components (see the original argument for a unique community in \cite{dougoud_feasibility_2018} and the discussion in \cite{bizeul_positive_2021}).
In an ecological context, we differentiate two kind of components in vector $\bs{x^*}$, the non-vanishing components $x_k^*>0$ corresponding to surviving species and the vanishing ones $x_k^*=0$ corresponding to the species going to extinction:
$$
x_k(t)\xrightarrow[t\to\infty]{} 0\, .
$$ 
Hereafter, we describe heuristics of the statistical properties of $\bs{x}^*$: the proportion of surviving species in each community, the distribution of the corresponding abundances, which turns out to be a truncated Gaussian, etc.

\subsection{Heuristics for the properties of surviving species}
Starting from the model \eqref{eq:LV2}, the set of surviving species in community $i \in \{1,2\}$ is defined as

\begin{eqnarray*}
    \mathcal{S}_1 &=& \{k \in \mathcal{I}_1, \ x_k^* > 0 \}\ ; \quad \mathcal{I}_1 = [1,\beta_1n] \, ,\\
    \mathcal{S}_2 &=& \{k \in \mathcal{I}_2, \ x_k^* > 0 \} \ ; \quad \mathcal{I}_2 = [\beta_1n+1,n] \, .
\end{eqnarray*}
Given the random equilibrium $\bs{x}^*$, we introduce the following quantities for each community $i\in \{1,2\}$
\begin{equation*}
    \hat{p}_i = \frac{\left | \mathcal{S}_i \right |}{\left |\mathcal{I}_i\right |} \quad,\quad  \hat{m}_i = \frac{1}{\left | \mathcal{S}_i \right |}\sum_{k \in \mathcal{I}_i} x_k^*\quad,\quad \hat{\sigma}_i^2=\frac{1}{\left | \mathcal{S}_i \right |}\sum_{k \in \mathcal{I}_i} (x_k^*)^2 \, .
\end{equation*}
Quantity $\hat{p}_i$ represents the proportion of surviving species in community $i$,  $\hat{m}_i$ the empirical mean of the abundances of the surviving species in community $i$ and $\hat{\sigma}_i^2$, the empirical mean square of the surviving species in community $i$. 

Denote by $Z\sim{\mathcal N}(0,1)$ a standard Gaussian random variable and by $\Phi$ the cumulative Gaussian distribution function:
$$
\Phi(x)=\int_{-\infty}^x \frac{e^{-\frac{u^2}2}}{\sqrt{2\pi}}\, du\, .
$$

In Heuristic \ref{heur:properties_surviving_species}, we derive the properties of surviving species, specifically the proportion of surviving species and the mean square of surviving species. The presentation of these properties is illustrated in Fig. \ref{fig:block_test} through the use of numerical simulations to support this heuristic. In addition, the technical details of how to obtain the heuristic are presented in the subsequent paragraphs.

\begin{heur}
\label{heur:properties_surviving_species}
Let $\bs{s}$ be the $2 \times 2$ matrix of interaction strengths and assume that the condition of Theorem \ref{th:block_unicite_centered} holds, then the following system of four equations and four unknowns $(p_1,p_2,\sigma_1,\sigma_2)$
\begin{eqnarray}
\label{eq:heur1}
p_1 &=&1- \Phi(\delta_1) \ , \\
\label{eq:heur2}
p_2 &=& 1- \Phi(\delta_2) \ , \\
\label{eq:heur3}
(\sigma_1)^2 &=&1+2\Delta_1 \mathbb{E}(Z|Z>\delta_1)+\Delta_1^2\mathbb{E}(Z^2|Z>\delta_1) \ , \\
\label{eq:heur4}
(\sigma_2)^2 &=& 1+2\Delta_2\mathbb{E}(Z|Z>\delta_2)+\Delta_2^2\mathbb{E}(Z^2|Z>\delta_2) \ , 
\end{eqnarray}
where
\begin{equation}
\label{eq:def_var_heur}
\Delta_i = \sqrt{p_1 (\sigma_1)^2\beta_1s_{i1}^2+p_2 (\sigma_2)^2\beta_2s_{i2}^2} \quad \textrm{and}\quad  \delta_i = \frac{-1}{\Delta_i}\,, 
\end{equation}
admits a unique solution $(p_1^*,p_2^*,\sigma_1^*,\sigma_2^*)$ and for $i \in \{1,2\}$
\begin{equation}\label{eq:convergence-heuristics1}
 \hat{p}_i \xrightarrow[n\to\infty]{a.s.} p_i^* \qquad \text{and}\qquad \hat{\sigma}_i\xrightarrow[n\to\infty]{a.s.} \sigma^*_i\, .
\end{equation}
In the sequel, denote by $
    \Delta^*_i = \sqrt{p_1^* (\sigma_1^*)^2\beta_1s_{i1}^2+p^*_2 (\sigma_2^*)^2\beta_2s_{i2}^2}$ and 
    $\delta^*_i = -1/{\Delta^*_i}$.
\end{heur}

\begin{rem}
    In Section \ref{subsec:heuristics-building}, we explain how to establish Eq. \eqref{eq:heur1}-\eqref{eq:heur4}. Both convergences in \eqref{eq:convergence-heuristics1} are part of the statement of the Heuristics and are justified by the fact that the starting points to establish Eq. \eqref{eq:heur1}-\eqref{eq:heur4} are the empirical quantities $\hat{p}_i$ and $\hat{\sigma}_i$.
\end{rem}

There is a strong matching between the solutions obtained by solving \eqref{eq:heur1}-\eqref{eq:heur4} and their empirical counterparts obtained by Monte-Carlo simulations. This is illustrated in Fig. \ref{fig:block_test}.
\begin{figure}[ht]
\centering
\begin{subfigure}{0.48\textwidth}
  \includegraphics[width=\textwidth]{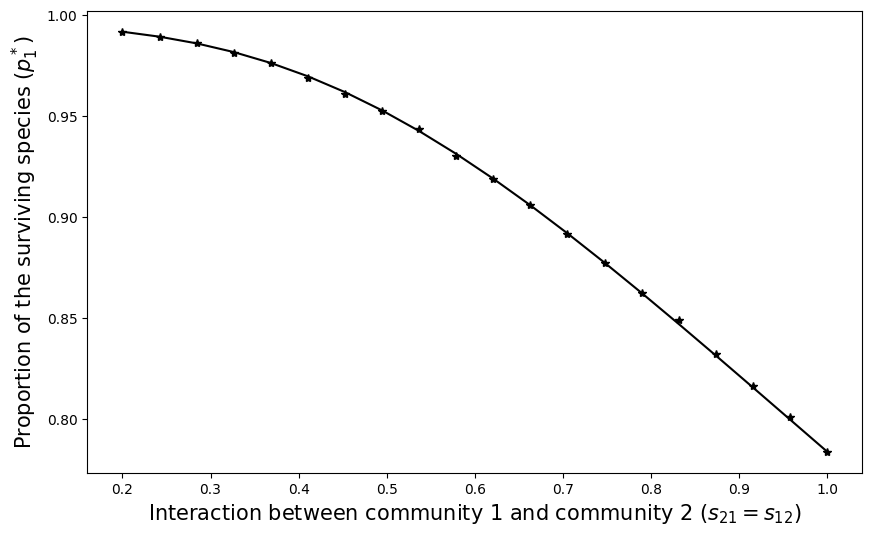}
\end{subfigure}
\hfill
\begin{subfigure}{0.48\textwidth}
  \includegraphics[width=\textwidth]{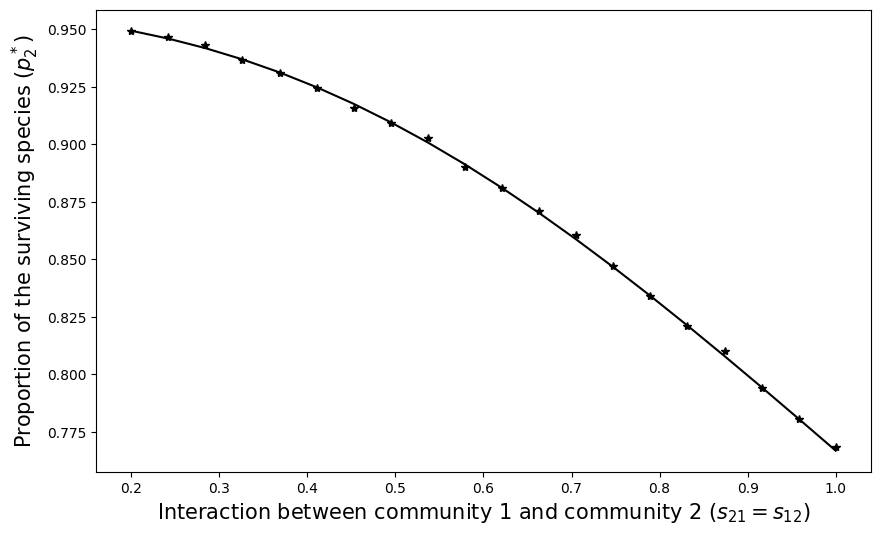}
  
\end{subfigure}
  
  
\hfill
\begin{subfigure}{0.48\textwidth}
  \includegraphics[width=\textwidth]{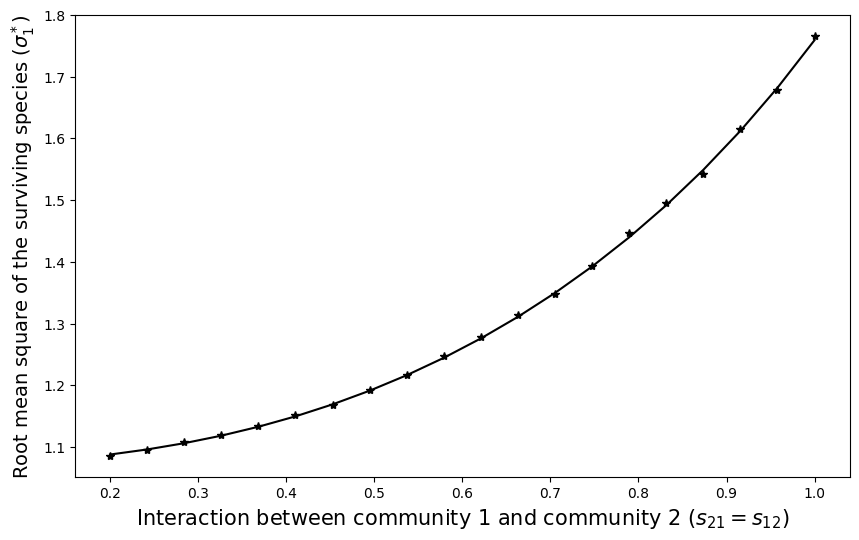}
  \caption{Parameters $(p_1^*,\sigma_1^*)$ versus $(s_{21},s_{12})$.}

\end{subfigure}
\hfill
\begin{subfigure}{0.48\textwidth}
  \includegraphics[width=\textwidth]{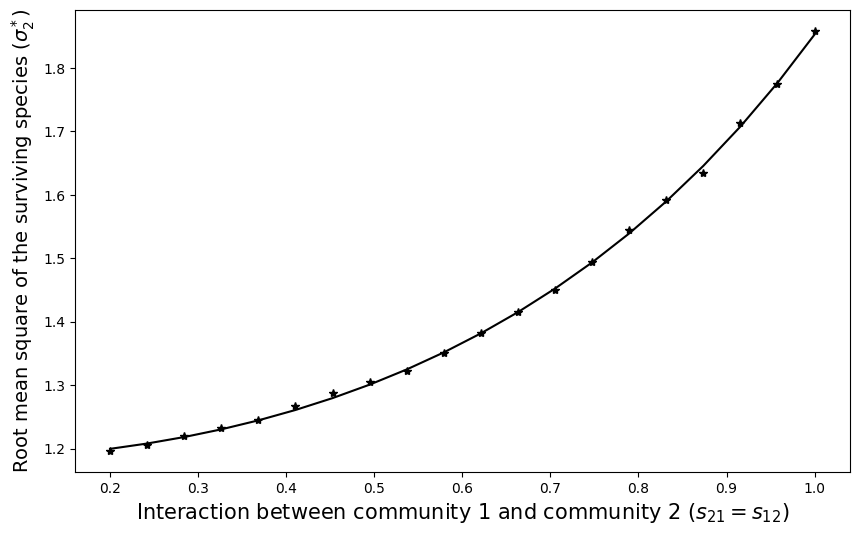}
  \caption{Parameters $(p_2^*,\sigma_2^*)$ versus $(s_{21},s_{12})$.}

\end{subfigure}
\caption{Comparison between the theoretical solutions $(p_1^*,p_2^*,\sigma_1^*,\sigma_2^*)$ of \eqref{eq:heur1}-\eqref{eq:heur4} and their empirical Monte Carlo counterpart (the star marker) as functions of the off-diagonal block interaction strength $(s_{12},s_{21})$. The left column is associated to the properties of community 1. The right column is associated to the properties of community 2. Matrix B has size $n=500$ and the number of Monte Carlo experiments is 500. The parameters are
$
\bs{s} = \begin{pmatrix}
1/3 & s_{12}\\ 
s_{21} & 1/\sqrt{2}
\end{pmatrix} \ , \ \bs{\beta} = \left(\frac{1}{2},\frac{1}{2} \right) \, .
$
When off-diagonal block interactions $s_{12},s_{21}$ increase, the proportion of surviving species $p^*$ decreases but the root mean square of their equilibrium abundances $\sigma^*$ increases. For any numerical details, please refer to Appendix \ref{app:num}.}
\label{fig:block_test}
\end{figure}

\subsection{Construction of the heuristics}\label{subsec:heuristics-building}
Obtaining information about the fixed point is equivalent to solving the LCP problem
\begin{equation*}
    x_k^*\left(1-x_k^*+\sum_{\ell=1}^n B_{k\ell}x_\ell^*\right) = 0 \ , \ \forall k \in [n] \,.
\end{equation*}
Consider the random variables:
\begin{equation*}
    \forall k \in [n], \ \check{Z}_k = \sum_{\ell \in \mathcal{S}_1 \cup \mathcal{S}_2}B_{k\ell}x_\ell^* \, .
\end{equation*}
We assume that asymptotically the $x_\ell^*$'s are independent from the $B_{k\ell}$'s, an assumption supported by the chaos hypothesis, see for instance Geman and Hwang \cite{geman_chaos_1982}. 

Denote by $\mathbb{E}_{\bs{x}^*}=\mathbb{E}(\,\cdot\mid \bs{x}^*)$ the conditional expectation with respect to $\bs{x}^*$. Notice that conditionally to $\bs{x}^*$, the $\check{Z}_k$'s are independent Gaussian random variables, whose first two moments can easily be computed, see Appendix \ref{app:proof_heuristics} for the details:
\begin{equation*}
    \forall k \in \mathcal{I}_i, \quad \textrm{Var}_{\bs{x}^*}(\check{Z}_k) \simeq \hat{p}_1 \hat{\sigma}_1^2\beta_1s_{i1}^2+\hat{p}_2 \hat{\sigma}_2^2\beta_2s_{i2}^2 \, .
\end{equation*}
Notice that the fact that $\textrm{Var}_{\bs{x}^*}(\check{Z}_k)$ only depends on $\hat{p}_1,\,\hat{p}_2, \, \hat{\sigma}_1, \, \hat{\sigma}_2$. It is assumed that the estimators, namely, $\hat{p}_1, \hat{p}_2, \hat{\sigma}_1, \hat{\sigma}_2$, are converging quantities when $n\rightarrow\infty$ to their respective limits, $p_1^*, p_2^*, \sigma_1^*, \sigma_2^*$. This assumption is consistent with the chaos hypothesis and is fundamental to the derivation of the heuristics.

Recall that 
\begin{equation*}
    \Delta^*_i = \sqrt{p_1^* (\sigma_1^*)^2\beta_1s_{i1}^2+p^*_2 (\sigma_2^*)^2\beta_2s_{i2}^2}\quad \text{and}\quad \delta^*_i = \frac{-1}{\Delta^*_i}\, . 
\end{equation*}
Then the convergence of $\hat{p}_1,\,\hat{p}_2, \, \hat{\sigma}_1, \, \hat{\sigma}_2$ supports the idea that $\check{Z}_k$ is unconditionally a Gaussian random variable with second moment:
$$
\textrm{Var}(\check{Z}_k)= (\Delta^*_i)^2 \, ,
$$
where $\Delta^*_i$ corresponds to the average variance of the interactions on community $i$ which depends on four parameters $p_1^*,\, p_2^*, \sigma_1^*, \, \sigma_2^*$. We can introduce two families of standard Gaussian random variables $(Z_k)_{k\in \mathcal{I}_1}$ and $(Z_k)_{k\in \mathcal{I}_2}$:
$$
\forall k \in \mathcal{I}_i, \ \  Z_k = \frac{\check{Z}_k}{\sqrt{\textrm{Var}(\check{Z}_k)}}= \frac{\check{Z}_k}{\Delta^*_i}.
$$    

Consider the equilibrium $\bs{x}^* = (x_k^*)_{k\in[n]}$, the definition of the LCP equilibrium implies if $k \in \mathcal{S}_1 \cup \mathcal{S}_2$:
$$
x_k^*(1-x_k^*+(Bx^*)_k)=0 \quad \textrm{and}\quad  1+(B\bs{x}^*)_k = 1+\check{Z}_k > 0
\, .$$
We finally obtain the following relationship for the surviving species:
\begin{equation}
\label{eq:relation_pers_species}
    x_k^* = 1+\Delta^*_i Z_k \quad \text{for} \quad k\in \mathcal{S}_i \, . 
\end{equation}
Note that $\Delta^*_i$ corresponds to the average variance of the interactions on community $i$.

\paragraph*{Heuristics \eqref{eq:heur1}-\eqref{eq:heur2}.}
We can write the first two equations:
\begin{eqnarray*}
    \mathbb{P}(x_k^* > 0 \mid k \in \mathcal{I}_1) &=& \mathbb{P}(Z_k > \delta^*_1 \mid k \in \mathcal{I}_1) =  1- \Phi(\delta^*_1)\, ,\\
    \mathbb{P}(x_k^* > 0 \mid k \in \mathcal{I}_2) &=& \mathbb{P}(Z_k > \delta^*_2 \mid k \in \mathcal{I}_2) = 1- \Phi(\delta^*_2) \, ,
\end{eqnarray*}
and finally obtain \eqref{eq:heur1} and \eqref{eq:heur2}:
\begin{equation*}
p^*_1 =1- \Phi(\delta^*_1) \quad \textrm{and}\quad p^*_2 =1- \Phi(\delta_2^*) \, .
\end{equation*}

\paragraph*{Heuristics \eqref{eq:heur3}-\eqref{eq:heur4}.}
Our starting point is the following generic representation of an abundance at equilibrium (either of a surviving or vanishing species) in the case $k \in \mathcal{S}_i$:
\begin{equation*}
x_k^* = \left( 1 +\Delta^*_i Z_k \right)\bs{1}_{\{Z_k>  \delta^*_i\}} = \bs{1}_{\{Z_k>  \delta^*_i\}} +\left( \Delta^*_i Z_k \right)\bs{1}_{\{Z_k>  \delta^*_i\}} \, .
\end{equation*}

Taking the square, we get:
\begin{multline*}
(x_k^*)^2 = \left( 1+\Delta^*_iZ_k \right)^2\bs{1}_{\{Z_k>  \delta^*_i\}} = \bs{1}_{\{Z_k>  \delta^*_i\}} + 2\Delta^*_i Z_k \bs{1}_{\{Z_k>  \delta^*_i\}} +\left( (\Delta^*_i)^2 Z^2_k \right)\bs{1}_{\{Z_k>  \delta^*_i\}} \, .
\end{multline*}
Summing over $\mathcal{S}_i$ and normalizing, we get
\begin{equation*}
\begin{split}
\frac{1}{|{\mathcal S}_i|}\sum_{k \in \mathcal{S}_i}(x_k^*)^{2} &= \frac{1}{|{\mathcal S}_i|}\sum_{k \in \mathcal{S}_i}\bs{1}_{\{Z_k > -\delta_i^*\}}
    +2\Delta^*_i\frac{1}{|{\mathcal S}_i|}\sum_{k \in \mathcal{S}_i}Z_k\bs{1}_{\{Z_k > -\delta_i^*\}}+(\Delta^*_i)^2\frac{1}{|{\mathcal S}_i|}\sum_{k \in \mathcal{S}_i}Z^2_k\bs{1}_{\{Z_k > \delta_i^*\}} \,, \\
\hat{\sigma}_i^2 &\stackrel{(a)}= 1+2\Delta^*_i\frac{|\mathcal{I}_i|}{|{\mathcal S}_i|}\frac{1}{|\mathcal{I}_i|}\sum_{k \in \mathcal{I}_i}Z_k\bs{1}_{\{Z_k > \delta^*_i\}}+(\Delta^*_i)^2\frac{|\mathcal{I}_i|}{|{\mathcal S}_i|}\frac{1}{|\mathcal{I}_i|}\sum_{k \in \mathcal{I}_i}Z^2_k\bs{1}_{\{Z_k > \delta_i^*\}} , \\
\hat{\sigma}_i^2 &\stackrel{(b)}\simeq 1+2\Delta^*_i \frac{1}{\mathbb{P}(Z > \delta^*_i)}\mathbb{E}(Z\bs{1}_{\{Z > \delta_i^*\}})+(\Delta^*_i)^2 \frac{1}{\mathbb{P}(Z > \delta^*_i)}\mathbb{E}(Z^2\bs{1}_{\{Z > \delta_i^*\}}), \\
\hat{\sigma}_i^2 &\simeq 1+2\Delta^*_i\mathbb{E}(Z \mid Z > \delta^*_i)+(\Delta^*_i)^2\mathbb{E}(Z^2 \mid Z > \delta^*_i).
\end{split}
\end{equation*}
where $(a)$ follows from the fact that $|{\mathcal S}_i| = \sum_{k \in \mathcal{S}_i}\bs{1}_{\{Z_k > \delta^*_i \}}$ (by definition of ${\mathcal S}_i$), $(b)$ from the law of large numbers $\frac{1}{|\mathcal{I}_i|} \sum_{k\in \mathcal{I}_i} Z^j_k \bs{1}_{\{Z_k>\delta_i\}} \xrightarrow[n\to\infty]{} \mathbb{E}Z^j \bs{1}_{\{Z>\delta^*_i\}},\quad j=1,2$ and $\frac{|{\mathcal S}_i|}{|\mathcal{I}_i|} \xrightarrow[n\to\infty]{} \mathbb{P}(Z>\delta^*_i)$ with $Z\sim{\mathcal N}(0,1)$. It remains to replace $\hat{\sigma}_i$ by its limit $\sigma_i^*$ to obtain \eqref{eq:heur3}-\eqref{eq:heur4}. 
We finally obtain the third and fourth equations:

\begin{eqnarray*}
    (\sigma^*_1)^2 &=&(1+\lambda^*_1)^2+2(1+\lambda^*_1)\Delta^*_1 \mathbb{E}(Z\mid Z>\delta^*_1)+(\Delta^*_1)^2\mathbb{E}(Z^2\mid Z>\delta^*_1)\ ,\\
    (\sigma^*_2)^2 &=& (1+\lambda^*_2)^2+2(1+\lambda^*_2)\Delta^*_2\mathbb{E}(Z\mid Z>\delta^*_2)+(\Delta^*_2)^2\mathbb{E}(Z^2\mid Z>\delta^*_2)\ .
\end{eqnarray*}

\subsection{General properties of the ecosystem}
The properties at equilibrium, such as the proportion and mean square of the abundance of surviving species, can be computed for each community by solving the system of equations in Heuristics \ref{heur:properties_surviving_species}. An additional property, the mean abundance of the surviving species at equilibrium for each community ($m_1^*, m_2^*$), can be calculated using a method similar to the mean square of the abundances (see Appendix \ref{app:heur_mean} for the details of the computations).
\begin{align}
\label{eq:mean_c1}
m_1^* &= 1+\Delta^*_1 \mathbb{E}(Z|Z>\delta^*_1) \ , \\
\label{eq:mean_c2}
m_2^* &= 1+\Delta^*_2 \mathbb{E}(Z|Z>\delta^*_2) \ .    
\end{align}
The two equations are not necessary for solving Heuristics \ref{heur:properties_surviving_species}, but they provide new information. In particular, strong inter- or intra-community interactions increase the mean abundance of the surviving species (see Fig. \ref{fig:distrib_surv}).

Conditional on each community, one can easily extend the properties of each community to the whole ecosystem. We denote by $p^*$ the proportion, $(\sigma^*)^2$ the mean square and $m^*$ the mean of surviving species. We observe the linear effect of community size $\bs{\beta}$ on general properties:
\begin{enumerate}
    \item Proportion of surviving species.
\begin{align*}
     \mathbb{P}(x_k^* > 0) &= \mathbb{P}(x_k^* > 0 \mid k \in \mathcal{I}_1)\mathbb{P}(k\in \mathcal{I}_1)+\mathbb{P}(x_k^* > 0 \mid k \in \mathcal{I}_2)\mathbb{P}(k\in \mathcal{I}_2) \, , \\
     p^* &= p^*_1 \beta_1+p^*_2 \beta_2 \, .
\end{align*}
\item Mean square of the abundance of the surviving species.
\begin{align*}
\mathbb{E}((x_k^*)^2) &= \mathbb{E}((x_k^*)^2\mid k \in \mathcal{I}_1)\mathbb{P}(k\in \mathcal{I}_1)+\mathbb{E}((x_k^*)^2 \mid k \in \mathcal{I}_2)\mathbb{P}(k\in \mathcal{I}_2) \, , \\
(\sigma^*)^2 &= (\sigma^*_1)^2 \beta_1 + (\sigma^*_2)^2 \beta_2 \, .
\end{align*}

\item Mean of the abundance of the surviving species
\begin{align*}
     \mathbb{E}(x_k^*) &= \mathbb{E}(x_k^{*} \mid k \in \mathcal{I}_1)\mathbb{P}(k\in \mathcal{I}_1)+\mathbb{E}(x_k^{*} \mid k \in \mathcal{I}_2)\mathbb{P}(k\in \mathcal{I}_2) \, , \\
    m^* &= m^*_1 \beta_1 + m^*_2 \beta_2 \, .
\end{align*}
\end{enumerate}

This linear relationship illustrates that the impact of an ecological community on the benefits of the entire ecosystem is directly proportional to its size. In other words, a larger ecological community will have a greater influence on the ecosystem than a smaller ecological community.

\subsection{Distribution of the surviving species}
We may recall the following representation of the abundance $x_k^*$ of a surviving species when $k \in \mathcal{S}_i$:
$$
x_k^* = 1+\Delta^*_i Z_k \text{ if } k\in \mathcal{S}_i,
$$
where $Z_k \sim \mathcal{N}(0,1)$ and $Z_k > \delta_i^* = \delta_i(p_i^*,\sigma_i^*)$ defined in \eqref{eq:relation_pers_species}.
This representation allows to characterize the distribution of $x_k^*$ of each community. It turns out that the surviving species of each community follow a truncated Gaussian distribution.
\begin{heur}
\label{heur:block_distribution}
Let $\bs{s}$ be the $2 \times 2$ matrix of interaction strengths and assume that the condition of Theorem \ref{th:block_unicite_centered} holds and let $(p^*_1,p_2^*,\sigma_1^*,\sigma_2^*)$ be the solution of the system \eqref{eq:heur1}-\eqref{eq:heur4}. Recall the definition \eqref{eq:def_var_heur} of $\Delta_i$ and $\delta_i$ and denote by $\delta_i^* = \delta_i(p_i^*,\sigma_i^*)$ and $\Delta_i^* = \Delta_i(p_i^*,\sigma_i^*)$. Let $x^*_k>0$ be a positive component of $\bs{x}^*$ belonging to the community $i$, then the law of $x^*_k$ is
$$
\mathcal{L}(x_k^*) \xrightarrow[n\to\infty]{} \mathcal{L}\left(1+\Delta^*_i Z \ \mid\ Z > \delta_i^* \right)\ ,
$$
where $Z\sim {\mathcal N}(0,1)$. Otherwise stated, asymptotically for $ k \in \mathcal{S}_i, \, x^*_k$ admits the following density
\begin{equation}
\label{eq:distribution}
 f_k(y) = \frac{\bs{1}_{\{y>0\}}}{\Phi(-\delta^*_i)}\frac{1}{\Delta^*_i\sqrt{2\pi}}\,
 \exp\left\{-\frac 12\left(\frac{y}{\Delta^*_i}+\delta^*_i\right)^2\right\}\ .
\end{equation}
\end{heur}
The heuristics are derived from the fact that, from equation \eqref{eq:relation_pers_species}, if $x_k^*$ is a surviving species and $k \in \mathcal{S}_i$ then
$$
x_k^* = 1+\Delta^*_i Z_k \, ,
$$
conditionally on the fact that the right-hand side of the equation is positive, that is $Z_k>\delta_i^*$. A simple change of variable yields the density - details are provided in Appendix \ref{app:proof_heuristics}.

Fig. \ref{fig:distrib_surv} illustrates the matching between the theoretical distribution obtained by equation \eqref{eq:distribution} and a histogram obtained by generating the interaction matrix for $2$ communities. 
In Fig. \ref{fig:distrib_uniform}, the validity of heuristics in the case of non-Gaussian entries is illustrated.

\begin{rem}
\label{rem:relaxed_assumptions_part2}
The proof relies on the Gaussianity assumption, but we are convinced that it could be extended beyond. In particular, in Figure \ref{fig:distrib_uniform}, non-Gaussian entries centered $\mathbb{E}(B_{k\ell}) = 0$ with variance one $\mathbb{E}(|B_{k\ell}|^2) =1$ are considered. The distribution of surviving species still fits the truncated Gaussian in this case.
\end{rem}

\begin{figure}[ht]
    \centering
    \includegraphics[width=0.8\textwidth]{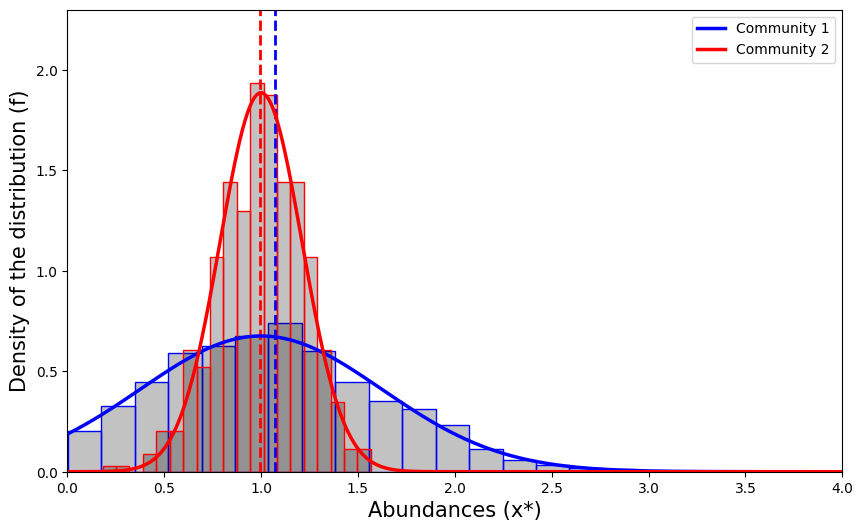}
    \captionsetup{singlelinecheck=off}
    \caption[Distribution of surviving species in each community 1.]{
Distribution of surviving species in each community. The $x$-axis represents the value of the abundances and the histogram is built upon the positive components of equilibrium $\bs{x}^*$ associated to each community. The blue-solid line (resp. red-solid line) represents the theoretical distribution of community 1 (resp. community 2) for parameters $\bs{s}$ as given by Heuristics \ref{heur:block_distribution}. The blue-dashed vertical line (resp. red-dashed vertical line) corresponds to the mean abundance of community 1 (resp. community 2) given by equations \eqref{eq:mean_c1}-\eqref{eq:mean_c2}. All the matrix entries $A_{ij}$'s are independent Gaussian ${\mathcal N}(0,1)$; the parameters are set to $n = 2000$, $\bs{\beta} = \left(0.75,0.25\right)$ and  $\bs{s} = {\scriptsize \begin{pmatrix}
1/2 & 1/\sqrt{2}\\ 
1/5 & 1/9
\end{pmatrix}}$.
}
\label{fig:distrib_surv}
\end{figure}
\begin{figure}[ht]
    \centering
    \includegraphics[width=0.8\textwidth]{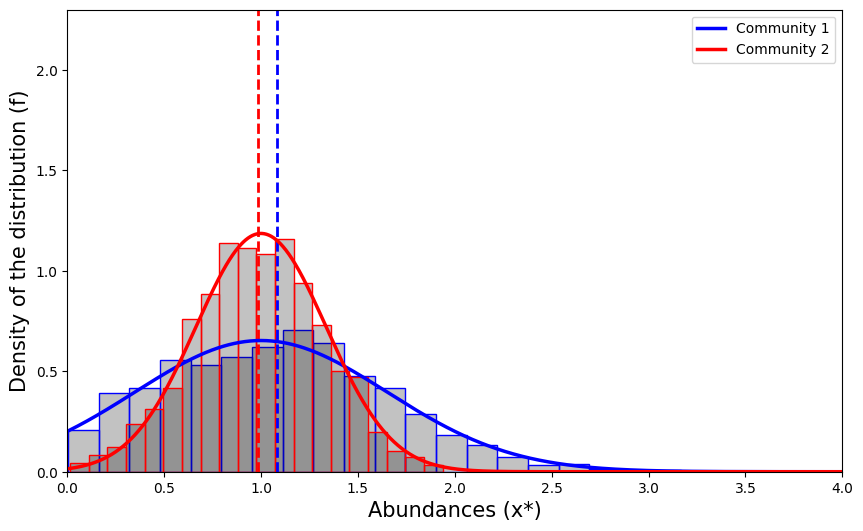}
    \captionsetup{singlelinecheck=off}
    \caption[Distribution of surviving species in each community 2.]{
Distribution of surviving species in each community with non-Gaussian entries. The $x$-axis represents the value of the abundances. The histogram is built upon the positive components of equilibrium $\bs{x}^*$ associated to each community.
(blue for community 1, red for community 2). The solid lines represent the theoretical distributions associated to parameters $\bs{s}$ as given by Heuristics \ref{heur:block_distribution}. The blue-dashed vertical line (resp. red-dashed vertical line) corresponds to the mean abundance of community 1 (resp. community 2). The entries of the $A_{ij}$ matrices are uniform $\mathcal{U}(-\sqrt{3},\sqrt{3})$ with variance $1$ and the parameters are set to $n=2000$, $\bs{\beta}=(0.5, 0.5)$ and $\bs{s} = {\scriptsize \begin{pmatrix}
1/2 & 2/3\\ 
1/3 & 1/4
\end{pmatrix}}$. Notice in particular that the histogram is well predicted by the theoretical distributions even if the entries are non-Gaussian.}
\label{fig:distrib_uniform}
\end{figure}

\section{Feasibility} \label{sec:fea}

Recall $s$ the interaction parameter in the case of a unique community. According to the work of Dougoud \textit{et al.}. \cite{dougoud_feasibility_2018}, if $s$ is fixed (i.e. does not depend on $n$) then there can be no feasible equilibrium at large $n$. 
Following this work, Bizeul and Najim \cite{bizeul_positive_2021} provided the appropriate normalization of $s$ to have a feasible equilibrium. The threshold corresponds to $s \sim 1/\sqrt{2\log(n)}$. The equilibrium is feasible almost surely when $s$ is less than this threshold value, 
i.e. when elements of random matrix $B$ are divided by $\sqrt{2n\log(n)}$ or a larger factor. Some extensions of these results have been made in the sparse case \cite{akjouj_feasibility_2022} and with a mean and pairwise correlated entries \cite{clenet_equilibrium_2022}. In this section, conditions are given on the matrices $\bs{s}$ to get a feasible equilibrium in each community, called co-feasibility. We then provide some ecological interpretations.

\subsection{Theoretical analysis of the threshold}
Recall the notation $\bs{x} = (x_k)_{k \in [n]}$ and denote by $\left \| \bs{x} \right \|_{\infty} = \underset{k \in [n]}{\max} \, |x_k|$.
We are interested in the existence of a feasible solution of the fixed point problem associated with the model \eqref{eq:LV2}. To consider this problem, we extend the computations of Bizeul and Najim in the framework of a block structure network. Consider $\bs{s}$ such that $I-B$ is invertible. The problem is defined by 
\begin{equation}
    \bs{x^*} = \bs{1}+B\bs{x^*} \ \Leftrightarrow \ \bs{x^*} = (I-B)^{-1} \bs{1}\, ,
    \label{eq:fea_fixed}
\end{equation}
The problem \eqref{eq:fea_fixed} admits a unique solution. We consider a matrix $\bs{s}$ which depends on $n$, i.e. $\bs{s}=\bs{s}_n$ such that:
$$
\bs{s}_n \xrightarrow[n\to\infty]{} \bs{0} \quad \Leftrightarrow \quad \forall \, i,j \in \{1,2\}\, ,\, s_{ij} \xrightarrow[n\to\infty]{} 0 \, .
$$
Note that for sufficiently large $n$, the problem satisfies the sufficient condition of Theorem \ref{th:block_unicite_centered} to have a unique globally stable equilibrium, which in this case is a feasible equilibrium.

Let matrix $B_n$ depending on the interaction matrix $\bs{s_n}$ defined by
\begin{equation}
B_n = V\bs{s}_nV^{\top}\circ\frac{1}{\sqrt{n}}\begin{pmatrix}
A_{11} & A_{12} \\ 
A_{21} & A_{22}
\end{pmatrix}\, ,
\label{eq:mat_int_fea}
\end{equation}
where 
$$ V \in \mathcal{M}_{n\times 2}, \,
V = \begin{pmatrix}
\bs{1}_{\mathcal{I}_1} & 0 \\ 
0 & \bs{1}_{\mathcal{I}_2} 
\end{pmatrix} \, .
$$
The spectral radius of $\frac{1}{\sqrt{n}}\begin{pmatrix}
A_{11} & A_{12} \\ 
A_{21} & A_{22}
\end{pmatrix}$ a.s. converges to $1$ due to the circular law \cite{tao_random_2010}. So as long as $\bs{s}_n$ is close to zero, the matrix $I - B_n$ is eventually (for large enough values of $n$) invertible.

\begin{theo}[Co-feasibility for the $2$-blocks model]
\label{th:fea_2b}
Assume that matrix $B_n$ is defined by the $2$-blocks model \eqref{eq:mat_int_fea}. Let $\bs{\beta} = (\beta_1,\beta_2), \, \beta_1+\beta_2 = 1$ represents the proportion of each community. Let $\bs{s}_n \xrightarrow[n\to\infty]{} 0$ and denote by $s_n^*=1/\sqrt{2\log n}$ the critical threshold. Let $x_n = (x_k)_{k\in[n]}$ be the
solution of \eqref{eq:fea_fixed}.
\begin{enumerate}
    \item If there exists $\varepsilon > 0$ such that eventually $\left \| (\bs{s_n}\circ \bs{s_n})\bs{\beta}^\top \right \|_{\infty} \geq (1+\varepsilon)(s_n^*)^2$ then
    $$
    \mathbb{P}\left \{ \underset{k\in [n]}{\min}\, x_k>0 \right \} \xrightarrow[n\rightarrow \infty]{} 0 \, .
    $$
    \item If there exists $\varepsilon > 0$ such that eventually $\left \| (\bs{s_n}\circ \bs{s_n})\bs{\beta}^\top \right \|_{\infty} \leq (1-\varepsilon)(s_n^*)^2$ then
    $$
    \mathbb{P}\left \{ \underset{k\in [n]}{\min}\, x_k>0 \right \} \xrightarrow[n\rightarrow \infty]{} 1 \, .
    $$
\end{enumerate}
\end{theo}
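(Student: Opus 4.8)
The strategy is to adapt the Bizeul--Najim analysis of the single-community feasibility threshold to the block setting by expanding the resolvent $(I-B_n)^{-1}$ in a Neumann series and isolating the first-order term, which is the one that drives the extreme-value behaviour of $\min_k x_k$. Write $x_n = (I-B_n)^{-1}\bs 1 = \bs 1 + B_n\bs 1 + (I-B_n)^{-1}B_n^2\bs 1$. The key point, established in \cite{bizeul_positive_2021}, is that the quadratic-and-higher remainder $(I-B_n)^{-1}B_n^2\bs 1$ is uniformly (in the sup norm) of smaller order than the fluctuations of the linear term $B_n\bs 1$, because $\|B_n\| = O(\|\bs s_n\|) \to 0$ while the entries of $B_n\bs 1$ have standard deviation of order $\|\bs s_n\|$ and we are looking at their minimum over $n$ coordinates, which sits at roughly $-\|\bs s_n\|\sqrt{2\log n}$. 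Hence feasibility is governed, to leading order, by whether $1 + (B_n\bs 1)_k > 0$ simultaneously for all $k$, i.e. by $\min_k (B_n\bs 1)_k > -1$.

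Next I would compute the law of the linear term. For $k \in \mathcal I_i$, the block structure \eqref{eq:mat_int_fea} gives $(B_n\bs 1)_k = \frac{1}{\sqrt n}\sum_{j=1}^2 s_{n,ij}\sum_{\ell \in \mathcal I_j} (A_{ij})_{k\ell}$, which is a centered Gaussian with variance $\frac{1}{n}\sum_j s_{n,ij}^2\, n_j = \sum_j s_{n,ij}^2 \beta_j + o(1) = \big((\bs s_n\circ\bs s_n)\bs\beta^\top\big)_i + o(1)$; moreover these $n$ Gaussians are independent across $k$. Writing $v_{n,i}^2$ for this variance, the sup norm $\|(\bs s_n\circ\bs s_n)\bs\beta^\top\|_\infty$ appearing in the statement is exactly $\max_i v_{n,i}^2$. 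The problem therefore reduces to: for a family of $n$ independent centered Gaussians, the first $\beta_1 n$ with variance $v_{n,1}^2$ and the remaining $\beta_2 n$ with variance $v_{n,2}^2$, determine whether the minimum stays above $-1$ with probability tending to $1$ or $0$. Classical extreme-value theory for Gaussians gives $\min$ of $m$ i.i.d.\ $\mathcal N(0,v^2)$ concentrated around $-v\sqrt{2\log m}$; since $\log(\beta_i n) = \log n + O(1)$, the community with the larger variance $v_{n,i}^2 = \max_i v_{n,i}^2 = \|(\bs s_n\circ\bs s_n)\bs\beta^\top\|_\infty$ dominates, and the relevant comparison is between $\max_i v_{n,i}\cdot\sqrt{2\log n}$ and $1$, i.e.\ between $\|(\bs s_n\circ\bs s_n)\bs\beta^\top\|_\infty$ and $(s_n^*)^2 = 1/(2\log n)$. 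A second-moment / Bonferroni argument then yields: if $\|(\bs s_n\circ\bs s_n)\bs\beta^\top\|_\infty \ge (1+\varepsilon)(s_n^*)^2$, the expected number of indices $k$ with $(B_n\bs 1)_k < -1$ diverges and (after a variance bound showing these events are essentially independent) at least one occurs whp, so $\mathbb P\{\min_k x_k > 0\}\to 0$; if $\|(\bs s_n\circ\bs s_n)\bs\beta^\top\|_\infty \le (1-\varepsilon)(s_n^*)^2$, a union bound over the $n$ coordinates shows $\mathbb P\{\exists k: (B_n\bs 1)_k \le -1\}\to 0$, and combined with the remainder estimate this gives $\mathbb P\{\min_k x_k > 0\}\to 1$.

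The main obstacle is making the reduction to the linear term fully rigorous and uniform: one must show that $\|(I-B_n)^{-1}B_n^2\bs 1\|_\infty = o(1/\sqrt{\log n})$ with overwhelming probability, so that it cannot perturb a coordinate of $\bs 1 + B_n\bs 1$ that is $\Theta(1/\sqrt{\log n})$ away from $0$ across $n$ simultaneous coordinates. This requires control of $\|B_n^2\bs 1\|_\infty$ — handled by splitting $B_n^2\bs 1 = B_n(B_n\bs 1)$, using the Gaussian concentration of $B_n\bs 1$ and an operator-norm bound on $B_n$ via the circular law / non-Hermitian spectral estimates already invoked after \eqref{eq:mat_int_fea} — together with $\|(I-B_n)^{-1}\| \le (1-\|B_n\|)^{-1} = 1+o(1)$. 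A subtlety special to the block case, absent in \cite{bizeul_positive_2021}, is that the two communities have different variances $v_{n,1}, v_{n,2}$, so in the supercritical case one must be careful to argue using only the indices belonging to the dominating community $i^* = \arg\max_i v_{n,i}^2$ (of which there are $\beta_{i^*} n = \Theta(n)$, enough for the second-moment blow-up), and in the subcritical case the union bound must be run separately over each community with its own variance; both are routine once the single-community estimates are in hand. I would also record, as in the single-community proof, that the critical scaling is insensitive to the $o(1)$ corrections in the variances $v_{n,i}^2 = \big((\bs s_n\circ\bs s_n)\bs\beta^\top\big)_i + O(1/n)$, since the $\varepsilon$-margin in the hypotheses absorbs them.
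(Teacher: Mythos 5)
Your proposal follows essentially the same route as the paper's own sketch (Appendix D): the Neumann expansion $x^* = \bs 1 + B_n\bs 1 + (I-B_n)^{-1}B_n^2\bs 1$ with the remainder declared negligible, the computation that $(B_n\bs 1)_k$ is centered Gaussian with variance $\bigl((\bs s_n\circ\bs s_n)\bs\beta^\top\bigr)_i$ for $k\in\mathcal I_i$, and Gaussian extreme-value theory reducing the threshold to $\max_i v_{n,i}^2$ versus $1/(2\log n)$. Your write-up is in fact somewhat more explicit than the paper's about the second-moment argument in the supercritical case and the control of the remainder, both of which the paper defers to the techniques of Bizeul and Najim.
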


Following and adapting the ideas developed in \cite{bizeul_positive_2021}, this theorem could be proved mathematically in full detail. We rather focus on the main ideas and provide a sketch of proof in Appendix \ref{app:proof_feasibility_2b}. The extension to the $b$-blocks case can be found in Appendix \ref{app:extension_b_block}.

\begin{rem}
\label{rem:relaxed_assumptions_part3}
Proof of Theorem \ref{th:fea_2b} strongly depends on the assumption of Gaussianity and equal growth rates of each species. However, according to the approach of Bizeul \textit{et al.} \cite{bizeul_positive_2021}, these assumptions could be relaxed. In particular, the phenomenon seems to be universal, i.e. the feasibility threshold works for a wide range of distribution choices.
\end{rem}

In the critical regime $s \propto 1/\sqrt{\log(n)}$ or equivalently $s^{-1} \propto \sqrt{\log(n)}$. We thus introduce matrix $\bs{\kappa}$ defined by
$$
\bs{\kappa} = \frac{1}{\sqrt{\log(n)}} \begin{pmatrix}
s^{-1}_{11} & s^{-1}_{12}\\ 
s^{-1}_{21} & s^{-1}_{22}
\end{pmatrix} \,.
$$
Notice that at criticality $\bs{\kappa}$ will be of order $O(1)$. This will be convenient for ecological interpretations. Using the inequality of Theorem \ref{th:fea_2b}, the co-feasibility condition on $\bs{\kappa}$ writes
\begin{equation}
    \left \| (\bs{s_n}\circ \bs{s_n})\bs{\beta}^\top \right \|_{\infty} < (s_n^*)^2 \quad \Leftrightarrow \quad \max \left(\frac{2\beta_1}{\kappa_{11}^2}+\frac{2\beta_2}{\kappa_{12}^2},\frac{2\beta_1}{\kappa_{21}^2}+\frac{2\beta_2}{\kappa_{22}^2}\right) < 1 \,.
    \label{eq:cond_kappa}
\end{equation}
If for $i=1,2$, $\beta_i=\frac 12$ and the entry of the matrix $\bs{\kappa}$ are equal, then condition \eqref{eq:cond_kappa} gives the threshold $\kappa_{ij} > \sqrt{2}$, and we recover the same critical threshold $\sqrt{2\log(n)}$ as in \cite{bizeul_positive_2021}.
\newline

\begin{rem}
Assume $\kappa_{11} = \kappa_{22} = \nu_1$ and $\kappa_{12} = \kappa_{21} = \nu_2$, condition \eqref{eq:cond_kappa} is reformulated as:
$$
\max\left(\frac{2\beta_1}{\nu_1^2}+\frac{2\beta_2}{\nu_2^2},\frac{2\beta_1}{\nu_2^2}+\frac{2\beta_2}{\nu_1^2}\right)<1 \,.
$$
If $\beta_1,\beta_2$ and $\nu_2$ are fixed, then the phase transition on the intra-community interactions occurs at
$$
\nu_1 > \min\left(\sqrt{\frac{\beta_1}{\frac{1}{2}-\frac{\beta_2}{\nu_2^2}}},\sqrt{\frac{\beta_2}{\frac{1}{2}-\frac{\beta_1}{\nu_2^2}}}\right) \,.
$$
In Fig. \ref{fig:fea_transition}, the phase transition is represented for a selected set of parameters. Note that the transition is rather smooth. The threshold depends on $\nu_2$. Increasing $\nu_2$ (decreasing the inter-block interactions) lowers the co-feasibility threshold to at least 1 (for communities of the same size). 
\end{rem}
\begin{figure}[ht]
    \centering
    \includegraphics[width=0.8\textwidth]{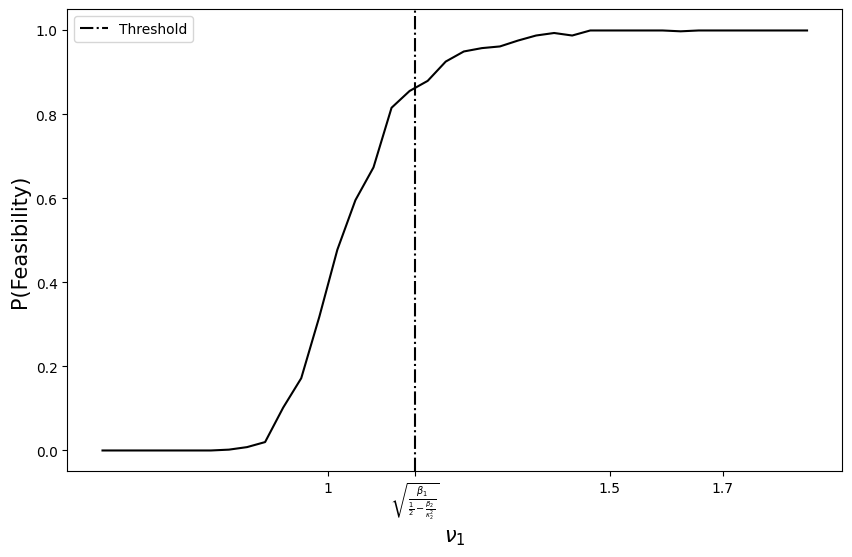}
    \caption{Transition towards co-feasibility for the $2$-blocks model \eqref{eq:mat_int}. For each value $\nu_1$ on the x-axis, we simulate $500$ matrices $B$ of size $n = 5000$ with two communities of the same size ($\beta_1=\beta_2=0.5$) with the inter-block interactions fixed at $s_{21}(\nu_2) = s_{21}(\nu_2) = 1/2\sqrt{\log(n)}$ and compute the solution $\bs{x}$ of Theorem \ref{th:fea_2b} at the scaling for the intra-block interactions $s_{11}(\nu_1) = s_{22}(\nu_1) = 1/\nu_1 \sqrt{\log(n)}$. The curve represents the proportion of feasible solutions $\bs{x}$ obtained for the 500 simulations. The dotdashed vertical line corresponds to $\nu_1 = \sqrt{\frac{\beta_1}{\frac{1}{2}-\frac{\beta_2}{\nu_2^2}}} = 2/\sqrt{3}$.
    } 
    \label{fig:fea_transition}
\end{figure}

\subsection{Preservation of co-feasibility}
\label{subsec:fea}
Equation \eqref{eq:cond_kappa} defines a ``co-feasibility domain" and gives a constraint in five dimensions. Recall that $\beta_1 = 1 - \beta_2$, the two communities of species can be studied independently i.e. the two components of equation \eqref{eq:cond_kappa} respectively give the feasibility condition for each community:
$$
\begin{cases}
 & \text{If } \frac{2\beta_1}{\kappa_{11}^2}+\frac{2\beta_2}{\kappa_{12}^2} < 1 \text{, then community 1 is feasible.} \\ 
 & \text{If } \frac{2\beta_1}{\kappa_{21}^2}+\frac{2\beta_2}{\kappa_{22}^2}<1  \text{, then community 2 is feasible.} 
\end{cases}
$$
The first community (resp. the second one) will be affected by changing $\kappa_{11}$, $\kappa_{12}$ (resp. $\kappa_{21}$, $\kappa_{22}$). In general, increasing the inter- or intra- interaction strength will decrease the probability of having a co-feasible equilibrium.

If $\kappa_{12}=\kappa_{21}=\infty$, then condition \eqref{eq:cond_kappa} gives the co-feasibility conditions for each community:
$$
s_{11} < \frac{1}{\sqrt{2\beta_1\log(n)}} \quad \textrm{and}\quad 
s_{22} < \frac{1}{\sqrt{2\beta_2\log(n)}} \, .
$$
For the same $s$, it means 
$$
s < \frac{1}{\sqrt{2\log(n)}\max(\beta_1,\beta_2)} \, .
$$

As an example of application, suppose we start with co-feasible communities of equal size ($\beta_1 = \beta_2 = 0.5$) and add interactions between these two groups, co-feasibility may be dropped (see Fig. \ref{fig:scenario2}). The co-feasibility domain is illustrated in Fig. \ref{fig:feasibility_domain}. It shows a threshold where the co-feasibility property is satisfied above the curve. This means that the lower the values of $\kappa_{11}$ and $\kappa_{22}$, i.e. the stronger the interactions within the groups, the more likely the co-feasibility property is lost. We can conclude that an independent group structure is more likely to be co-feasible and therefore stable, which supports previous work on compartmentalization models \cite{stouffer_compartmentalization_2011}.

\begin{figure}[ht]
    \centering
    \includegraphics[width = \textwidth]{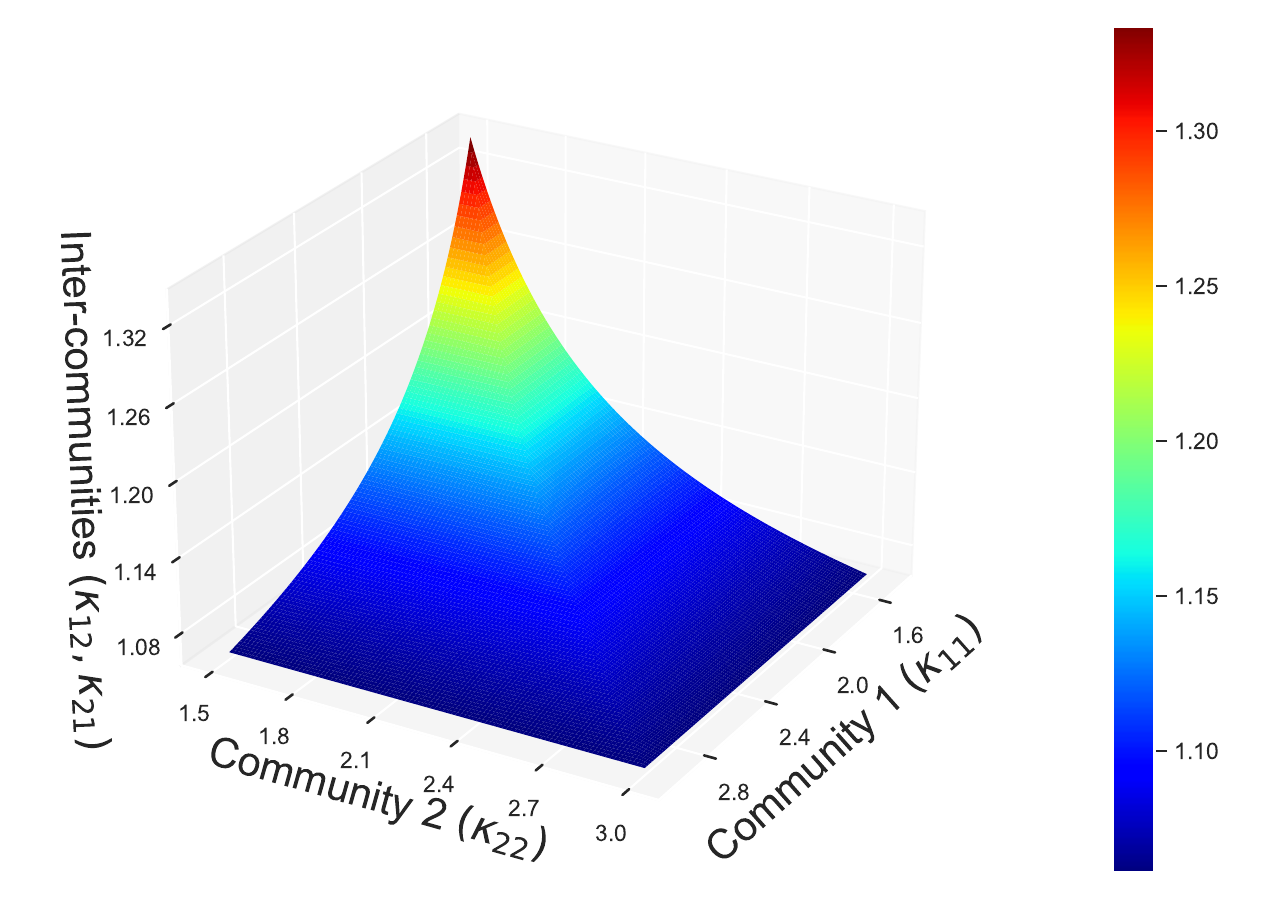}
    \caption{Representation of the co-feasibility phase diagram. The co-feasible domain is above the surface. The z-axis (resp x-axis) is the strength of interaction within community 1 - $\kappa_{11}$ (resp community 2 - $\kappa_{22}$). The y-axis is the inter-community interactions $\kappa_{12} = \kappa_{21}$. The colored area, where the gradient of color represents the strength of inter-community interactions (same values of the y-axis), illustrates the threshold between the co-feasible and non-co-feasible domains in the system \eqref{eq:LV2}.}
    \label{fig:feasibility_domain}
\end{figure}

\subsection{Impact of the community size}

For a fixed matrix $\bs{\kappa}$, the condition to have a co-feasible fixed point can be computed as a function of the size of each community i.e. the pair $\bs{\beta}=(\beta_1,\beta_2)$.
Starting from the co-feasibility inequality \eqref{eq:cond_kappa}:
$$
\max \left(\frac{2\beta_1}{\kappa_{11}^2}+\frac{2\beta_2}{\kappa_{12}^2},\frac{2\beta_1}{\kappa_{21}^2}+\frac{2\beta_2}{\kappa_{22}^2}\right) < 1\, ,
$$
the two components are studied independently,
\begin{equation*}
\frac{2\beta_1}{\kappa_{11}^2}+\frac{2(1-\beta_1)}{\kappa_{12}^2} < 1\quad \Leftrightarrow \quad \beta_1 \left(\frac{2}{\kappa_{11}^2}-\frac{2}{\kappa_{12}^2} \right) < 1-\frac{2}{\kappa_{12}^2}
\quad  \Rightarrow \quad \beta_1 < \frac{1-\frac{2}{\kappa_{12}^2}}{\left(\frac{2}{\kappa_{11}^2}-\frac{2}{\kappa_{12}^2} \right)}\quad \textrm{if}\ \kappa_{11} < \kappa_{12} \, .
\end{equation*}
Similarly, one has
\begin{equation*}
\sqrt{\frac{2\beta_1}{\kappa_{21}^2}+\frac{2(1-\beta_1)}{\kappa_{22}^2}} < 1 \quad \Rightarrow \quad \beta_1 >  \frac{1-\frac{2}{\kappa_{22}^2}}{\left(\frac{2}{\kappa_{21}^2}-\frac{2}{\kappa_{22}^2} \right)}\quad \textrm{if}\ \kappa_{22} < \kappa_{21} \, .
\end{equation*}
In the case where the intra-community interactions $(\kappa_{11}\,,\,\kappa_{22})$ are smaller than the inter-community interactions $(\kappa_{12}\,,\,\kappa_{21})$, we obtain an upper and a lower bound for the admissible size of each community $\beta_1,\beta_2$ to have a co-feasible equilibrium. In Fig. \ref{fig:impact_size}, different cases of the co-feasibility zone are represented according to the inter-community interactions $(\kappa_{12}\,,\,\kappa_{21})$. If the intra-community interactions are different, the community with the lowest interaction $\kappa_{ii}$ is advantaged i.e. the size of the community can be larger.

\begin{figure}[ht]
\centering
\begin{subfigure}[b]{0.32\textwidth}
  \centering
  \includegraphics[width=\textwidth]{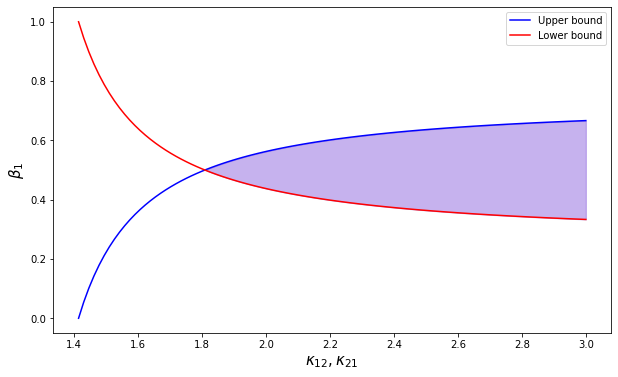}
  \caption{$\kappa_{11} = 1.2$, $\kappa_{22} = 1.2$}

\end{subfigure}
\hfill
\begin{subfigure}[b]{0.32\textwidth}
  \centering
  \includegraphics[width=\textwidth]{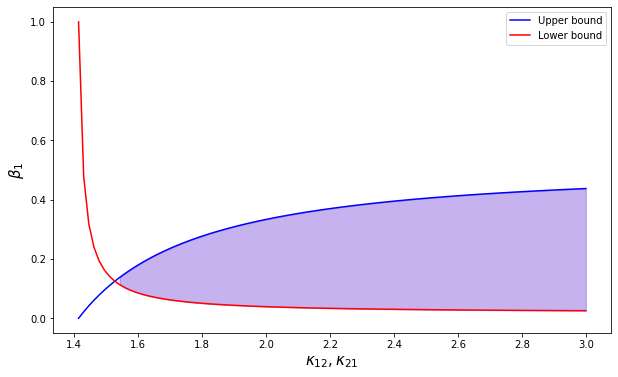}
  \caption{$\kappa_{11} = 1.2$, $\kappa_{22} = 1.4$}

\end{subfigure}
\hfill
\begin{subfigure}[b]{0.32\textwidth}
  \centering
  \includegraphics[width=\textwidth]{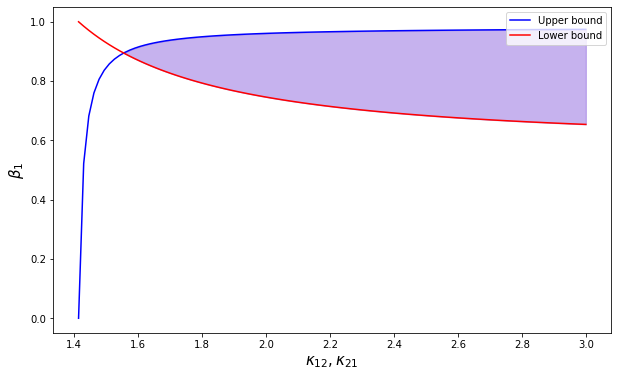}
  \caption{$\kappa_{11} = 1.4$, $\kappa_{22} = 0.9$}
\end{subfigure}
\caption{Representation of the co-feasibility domain depending on the fixed intra-community interaction. In (a), (b), (c), a different scenario of intra-community interaction is presented. Each panel represents the upper-bound (blue curve) and the lower-bound (red curve) of the size of community 1 as a function of the interaction between two communities $(\kappa_{12},\kappa_{21})$. The blue area is the admissible zone to have a co-feasible fixed point in \eqref{eq:LV2}. The size of community $2$ is equal to $\beta_2 = 1-\beta_1$.
}
\label{fig:impact_size}
\end{figure}

\begin{rem}
If $(\kappa_{11}, \kappa_{22})$ are greater than $(\kappa_{12}, \kappa_{21})$, there may be different situations depending on the value of $(\kappa_{11}, \kappa_{22})$, i.e. if $(\kappa_{11}, \kappa_{22})> \sqrt{2}$ are large, we can have co-feasibility (see Fig.\ref{fig:kappa_big}), whereas if $(\kappa_{11}, \kappa_{22})$ are small, we may not have co-feasibility (see Fig. \ref{fig:kappa_small}). From a biological perspective, we believe that the case $(\kappa_{11}, \kappa_{22})$ smaller than $(\kappa_{12}, \kappa_{21})$ is more significant because the interactions within each community are stronger than those between communities.  \end{rem}

\begin{figure}[ht]
\centering
\begin{subfigure}[b]{0.49\textwidth}
  \centering
    \includegraphics[width=\textwidth]{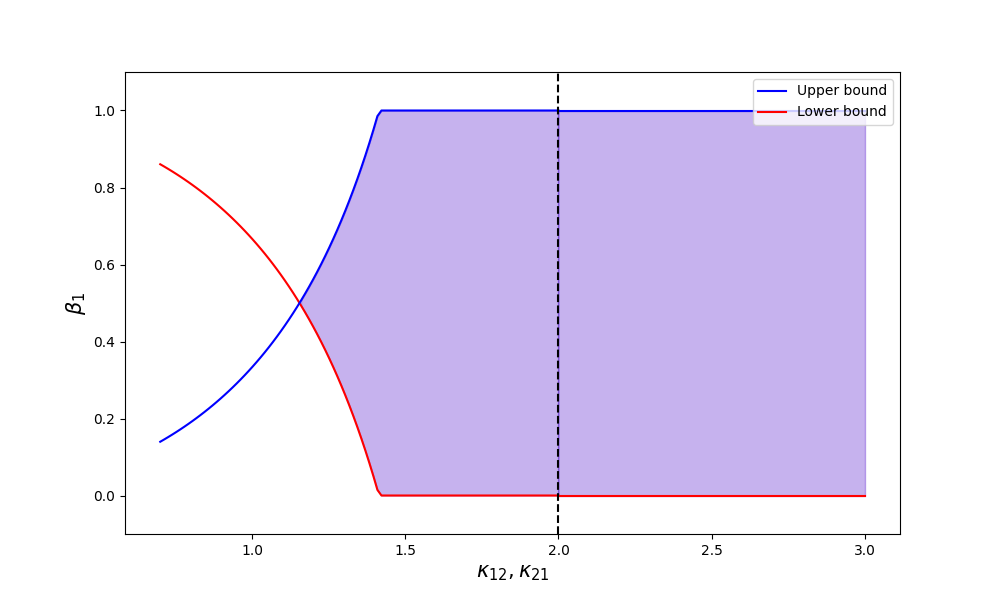}
  \caption{$\kappa_{11} = 2$, $\kappa_{22} = 2$}
  \label{fig:kappa_big}
\end{subfigure}
\hfill
\begin{subfigure}[b]{0.49\textwidth}
  \centering
  \includegraphics[width=\textwidth]{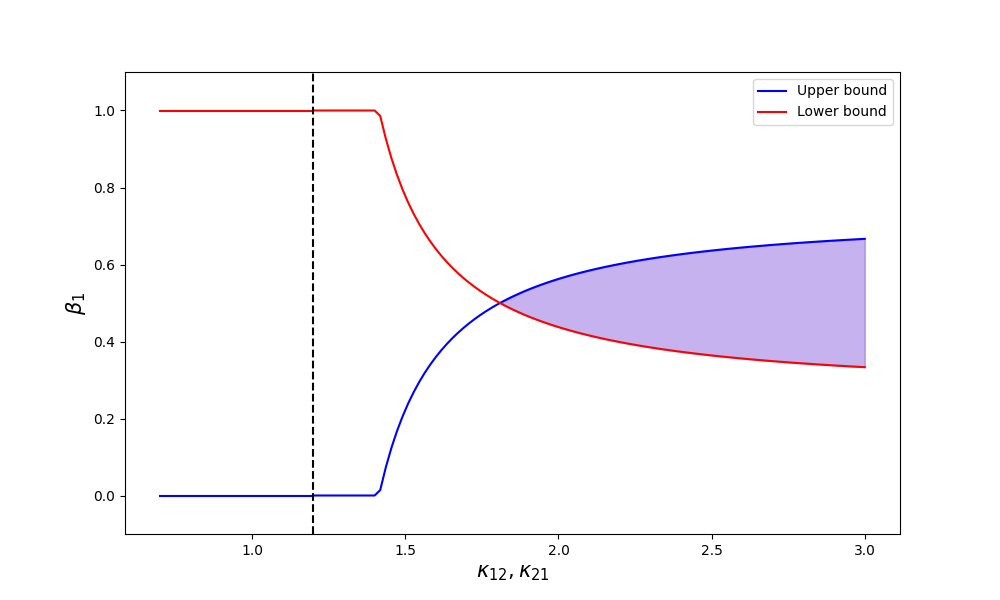}
  \caption{$\kappa_{11} = 1.2$, $\kappa_{22} = 1.2$}
  \label{fig:kappa_small}
\end{subfigure}
\caption{Representation of the co-feasibility domain depending on the fixed intra-community interaction in the case where $(\kappa_{11},\kappa_{22})$ can be smaller than $(\kappa_{12},\kappa_{21})$. Each panel represents the upper-bound (blue curve) and the lower-bound (red curve) of the size of community 1 as a function of the interaction between two communities $(\kappa_{12},\kappa_{21})$. The blue area is the admissible zone to have a co-feasible fixed point in \eqref{eq:LV2}. The dashed line represents the threshold value of $(\kappa_{11},\kappa_{22})$. The size of community $2$ is equal to $\beta_2 = 1-\beta_1$. In (a), we present a scenario of intra-community interaction where the values of $(\kappa_{11},\kappa_{22}) = 2$ are large. In (b), we present a scenario of intra-community interaction where the values of $(\kappa_{11},\kappa_{22}) = 1.2$ are small.
}
\end{figure}

\subsection{Connection increases co-feasibility}

In Section \ref{subsec:fea}, we analyzed the co-feasibility condition for a scenario involving two communities. We presented a co-feasibility domain defined by
\begin{align}
&\quad \left \| (\bs{s_n}\circ\bs{s_n}) \bs{\beta}^\top \right \|_{\infty} < \frac{1}{2\log(n)} := (s_n^*)^2\, , \nonumber \\
&\Leftrightarrow \quad \max \left(\beta_1s_{11}^2+\beta_2s_{12}^2,\beta_1s_{21}^2+\beta_2s_{22}^2\right) < \frac{1}{2\log(n)}\, , \nonumber \\
&\Leftrightarrow \quad \beta_1s_{11}^2+\beta_2s_{12}^2 < \frac{1}{2\log(n)} \quad \text{and} \quad \beta_1s_{21}^2+\beta_2s_{22}^2 < \frac{1}{2\log(n)} \label{eq:constraint_community}\, .
\end{align}
These two distinct conditions within the communities have led us to the conclusion that community isolation is beneficial for coexistence. However, general constraints that affect all interactions could be considered at the ecosystem scale. To this end, we introduce a complementary condition: the global variance of the interaction coefficients in the ecosystem remains invariant, i.e.
$$
\beta_1s_{11}^2+\beta_2s_{12}^2+\beta_1s_{21}^2+\beta_2s_{22}^2 = \Gamma \, ,
$$
with $\Gamma > 0$.

In order to remove the $n$ dependency, we define $\gamma_{ij}^2 = s_{ij}^2\log(n)$ and $\Gamma_n = \Gamma\log(n)$ (note that the $\gamma$ coefficients are defined differently from the $\kappa$ coefficients). Combined with condition \eqref{eq:constraint_community}, we get the following system of equations:
\begin{equation}
\label{eq:cond_fea}
\begin{cases}
&\beta_1\gamma_{11}^2+\beta_2\gamma_{12}^2+\beta_1\gamma_{21}^2+\beta_2\gamma_{22}^2 = \Gamma_n \, ,\\
& 2\beta_1\gamma_{11}^2+2\beta_2\gamma_{12}^2 < 1 , \quad \text{(feasibility condition for community 1)}\, , \\
& 2\beta_1\gamma_{21}^2+2\beta_2\gamma_{22}^2 < 1 , \quad \text{(feasibility condition for community 2)}\, .
\end{cases}
\end{equation}
Assuming the fixed intra-community variances $\gamma_{11}$ and $\gamma_{22}$, we seek to determine the co-feasibility conditions for the inter-community variances $\gamma_{12}$ and $\gamma_{21}$. In this case, the constraint on the total variance corresponds to the equation of an ellipse in the $(\gamma_{12},\gamma_{21})$ plane:
\begin{align*}
& \quad \quad \beta_1\gamma_{11}^2+\beta_2\gamma_{12}^2+\beta_1\gamma_{21}^2+\beta_2\gamma_{22}^2 = \Gamma_n \, , \\
&\Leftrightarrow \left(\frac{\beta_2}{\Gamma_n-\beta_1\gamma_{11}^2-\beta_2\gamma_{22}^2} \right)\gamma_{12}^2+\left(\frac{\beta_1}{\Gamma_n-\beta_1\gamma_{11}^2-\beta_2\gamma_{22}^2} \right)\gamma_{21}^2 = 1 \, ,
\end{align*}
The equations below provide the values for the semi-major axis $a$ and the semi-minor axis $b$ of the ellipse:
$$
a = \sqrt{\frac{\Gamma_n-\beta_1\gamma_{11}^2-\beta_2\gamma_{22}^2}{\beta_2}} \quad , \quad b = \sqrt{\frac{\Gamma_n-\beta_1\gamma_{11}^2-\beta_2\gamma_{22}^2}{\beta_1}} \,.
$$
Note that if both communities are of equal size ($\beta_1 = \beta_2$), a circle with radius $a$ is obtained.
\newline

From a visual standpoint, the conditions \eqref{eq:cond_fea} are depicted in Figure \ref{fig:fea_constraint}. Since the coefficients $(\gamma_{12},\gamma_{21})$ are non-negative, we are only interested in the positive orthant. The feasibility condition for community 1 is given by the horizontal axis defined by
$$
2\beta_1\gamma_{11}^2+2\beta_2\gamma_{12}^2 < 1 \quad \Leftrightarrow \quad \gamma_{12} < \sqrt{\frac{1-2\beta_1\gamma_{11}^2}{2\beta_2}} \, ,
$$
and the one of community 2 is given by the vertical axis defined by
$$
2\beta_1\gamma_{21}^2+2\beta_2\gamma_{22}^2 < 1 \quad \Leftrightarrow \quad \gamma_{21} < \sqrt{\frac{1-2\beta_2\gamma_{22}^2}{2\beta_1}} \, .
$$

The intersection between the vertical (resp. horizontal) line and the ellipse occurs when the semi-major axis (resp. semi-minor axis) exceeds the vertical condition $a > \sqrt{(1-2\beta_1\gamma_{11}^2)/(2\beta_2)}$ (resp. horizontal condition $b > \sqrt{(1-2\beta_2\gamma_{22}^2)/(2\beta_1)}$).

\begin{rem}
By replacing the feasibility condition of $\gamma_{21}$ in the equation of the ellipse, we can derive the intersection between the vertical axis and the ellipse as follows: 
$$
\gamma_{21}^2 = \frac{\Gamma_n-\beta_2\gamma_{22}^2-1/2}{\beta_1} \, ,
$$
equivalent to the feasibility condition of $\gamma_{12}$ (by replacing $\Gamma_n$):
$$
\gamma_{12} < \sqrt{\frac{1-2\beta_1\gamma_{11}^2}{2\beta_2}} \, .
$$
We identify the range of co-feasibility between the two groups for $\gamma_{21}$
$$
\frac{\Gamma_n-\beta_2\gamma_{22}^2-1/2}{\beta_1} < \gamma_{21}^2 < \frac{1-2\beta_2\gamma_{22}^2}{2\beta_1} \, .
$$
\end{rem}

This simple framework allows for testing different scenarios. Figure \ref{subfig:ellipse1} is the reference figure. It represents a situation where all the potential interactions between communities lead to co-feasibility. In Figure \ref{subfig:ellipse2}, the total system variance is increased, which results in a reduction of co-feasibility options: the interactions between the two communities must be high $\gamma_{12}, \gamma_{21}\gg 0$. When the intra-community variances are increased as shown in Figure \ref{subfig:ellipse3}, the ellipse shrinks, and the set of inter-community interaction variances is reduced. This observation reinforces the findings of Section \ref{subsec:fea} where large interaction between communities enables co-feasibility through community isolation. In the concluding example in Figure \ref{subfig:ellipse4}, we reduce only the $\gamma_{11}$ interaction in community 1. We observe that the impact of community 1 on 2 $\gamma_{21}$ must be weaker, but the impact of community 2 on 1 $\gamma_{12}$ can no longer be weak. Weaker interactions within community 1 imply a stronger connection between the communities for co-feasibility. 
\begin{figure}[ht]
\centering
\begin{subfigure}{0.45\textwidth}
  \includegraphics[width=\textwidth]{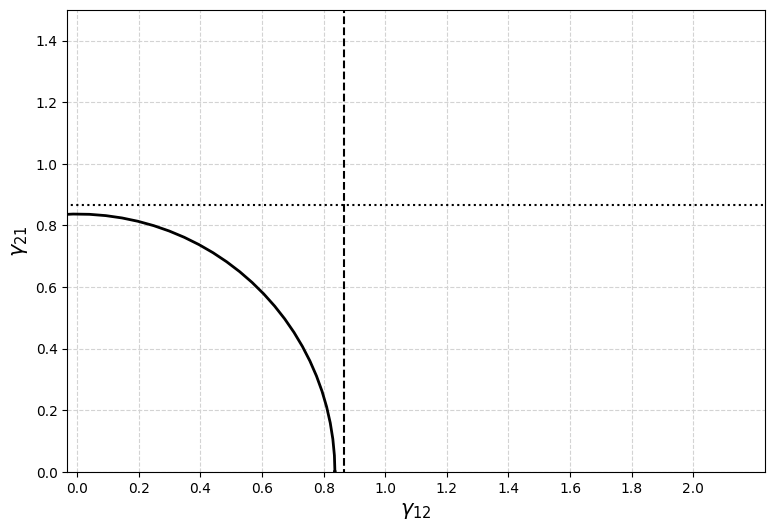}
  \caption{$\gamma_{11} = \gamma_{22} = 0.5,\, \Gamma_n = 0.6$.}
  \label{subfig:ellipse1}
\end{subfigure}
\hfill
\begin{subfigure}{0.45\textwidth}
\includegraphics[width=\textwidth]{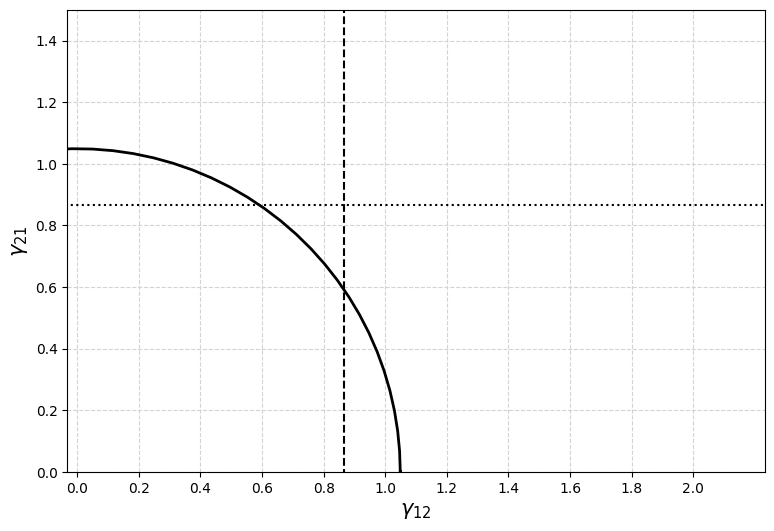}
\caption{$\gamma_{11} = \gamma_{22} = 0.5,\, \Gamma_n = 0.8$.}
\label{subfig:ellipse2}
\end{subfigure}
\hfill
\begin{subfigure}{0.45\textwidth}
  \includegraphics[width=\textwidth]{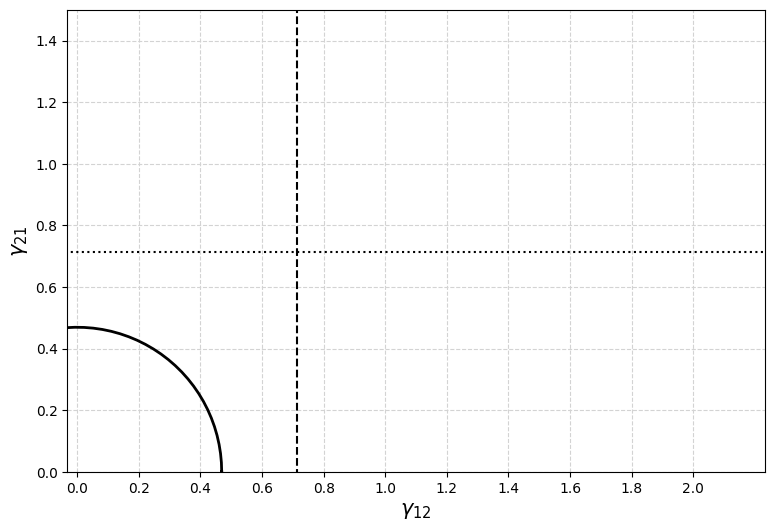}
  \caption{$\gamma_{11} = \gamma_{22} = 0.7,\, \Gamma_n = 0.6$.}
  \label{subfig:ellipse3}
\end{subfigure}
\hfill
\begin{subfigure}{0.45\textwidth}
  \includegraphics[width=\textwidth]{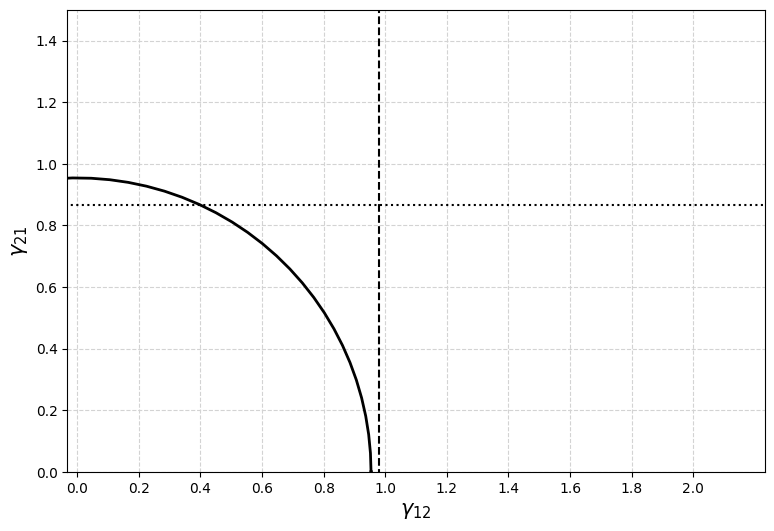}
  \caption{$\gamma_{11} = 0.2,\, \gamma_{22} = 0.5,\, \Gamma_n = 0.6$.}
  \label{subfig:ellipse4}
\end{subfigure}
\caption{Graphical representation of equation system \eqref{eq:cond_fea}. The solid line circle constrains the total variance of the system (equation 1 in \eqref{eq:cond_fea}), while the dashed and dotted lines correspond to feasibility conditions for community 1 (equation 2 in \eqref{eq:cond_fea}) and 2 (equation 3 in \eqref{eq:cond_fea}), respectively. In each figure, two communities of equal size are considered with $\beta_1 = \beta_2 = 1/2$, creating the solid line circle (rather than an ellipse) in this particular case. Each figure illustrates a distinct interaction situation between two communities outlined in the caption. Figure (a) serves as the reference against which the other figures (b)-(c)-(d) are compared. The co-feasibility arises only for the values of ($\gamma_{12},\gamma_{21}$) on the line found within the inner square at the bottom left-hand corner of the figure. When values of $(\gamma_{12},\gamma_{21})$ on the line are outside this inner square, then one or the other community is not feasible anymore.}
\label{fig:fea_constraint}
\end{figure}

\section{Discussion and perspectives}
In this paper, we described a model of the dynamics of species abundances when the interaction among species is structured in multiple communities. The main interest is to outline the effect of a block structure on the stability and persistence of species. We defined an interaction matrix per block which has several characteristics such as the strength of the interactions $\bs{s}$ and the size of the community $\bs{\beta}$. Specifically, we described the dynamics and properties of each community in the system (feasibility, proportion of surviving species, mean and root mean square of the abundances of surviving species) and their effect on each other. In this context, we focused most of our analysis on the case of two interacting communities. However, our results can be extended to more than $2$ communities (see Appendix \ref{app:extension_b_block}).

First, theoretical conditions were given for a unique globally stable equilibrium in the model \eqref{eq:LV2} with surviving and vanishing species. This follows from Lyapunov conditions related to a result of Takeuchi and Adachi \cite{takeuchi_existence_1980} and random matrix theory. These stability results had been found in the case of a single community by Clenet \textit{et al.} \cite{clenet_equilibrium_2023}. This complements the stability properties  in the Lotka-Volterra system studied by Stone \cite{stone_feasibility_2018} and Gibbs \textit{et al.} \cite{gibbs_effect_2018}. Recent random matrix methods allow us to describe the spectrum of a block matrix and plot it numerically. For a detailed discussion of random matrices in the Lotka-Volterra model, see Akjouj \textit{et al.} \cite{akjouj_complex_2022}.

Subsequently, we gave heuristics on the surviving species (proportion, mean and root mean square of their abundances). These heuristics have also been found in the case of a single community by Clenet \textit{et al.} \cite{clenet_equilibrium_2023}. From a physicist's point of view and using the methods of Bunin \cite{bunin_ecological_2017} and Galla \cite{galla_dynamically_2018}, Barbier \textit{et al.} \cite{barbier_generic_2018} and Poley \textit{et al.} \cite{poley_generalized_2023} have extended the heuristics in the block and cascade model. Previously, obtaining properties on surviving species in the LV model (not normalized by $\sqrt{n}$) was already done by Servan \textit{et al.} \cite{servan_coexistence_2018} where they consider a different growth rate for each species. The study of the stability and properties of surviving species in the LV system has also been carried out by Pettersson \textit{et al.} \cite{pettersson_predicting_2020, pettersson_stability_2020}. From an ecological point of view, heuristics are derived from the properties of interactions between multiple communities. 

In a third part, we studied the condition under which the feasibility threshold exists where all species coexist. We extend the feasibility results found by Bizeul and Najim \cite{bizeul_positive_2021} in the case of a block structure. A co-feasibility threshold was found in the form of an inequality that must be verified to have a feasible community set. This complements the recent results on interactions with a sparse structure \cite{akjouj_feasibility_2022} and interactions with a correlation profile \cite{clenet_equilibrium_2022}. We notice that to maximize the probability of co-feasibility, we need to minimize the interactions between the communities. Additionally, a community with weaker interactions can exhibit a larger total abundance in the ecosystem while maintaining the co-feasibility threshold. At the ecosystem level, when a generic constraint that affects all interactions is added, weaker interactions within one of the communities suggest a stronger connection between the communities for co-feasibility.

\vspace{0.2cm}

There are still many mathematical and ecological questions that remain unanswered in this type of model.

First, a rigorous mathematical proof of the heuristics presented here would be of interest, although the LCP procedure induces an a priori statistical bias that is difficult to handle. This issue is still pending in the single community case \cite{clenet_equilibrium_2023} and appears to be challenging to address. Recently, Akjouj \textit{et al.} \cite{akjouj_equilibria_2023} provided a rigorous proof using an approximate message passing (AMP) approach in the single community model with an interaction matrix taken from the Gaussian Orthogonal Ensemble (GOE). Their approach was based on work by Hachem \cite{hachem_approximate_2023}.

Second, we could extend the heuristics for two different scenarios. On the one hand, it would be interesting to add pairwise correlations between species  coefficients $A_{ij}$. This has already been done by physicists, see \cite{barbier_generic_2018,poley_generalized_2023}. In the study of feasibility, it was shown that a correlation profile does not change the feasibility threshold \cite{clenet_equilibrium_2022}. On the other hand, for the sake of simplicity, we have chosen to set the growth rates equal to the same value $r_k = 1$ for $k \in [n]$. It would be relevant to control the distribution of the growth rate as in \cite{servan_coexistence_2018} or to consider structural stability as in Saavedra \textit{et al.} \cite{saavedra_structural_2017}, i.e. how much can the growth rates be perturbed (initially all equal to $1$) without changing the type of equilibrium $\bs{x}^*$ obtained.
 
There are many applications of this kind of models in ecology. We could consider a spatial structure that accounts for spatial proximity in the sense that two nearby communities tend to be more strongly connected. For example, in an aquatic environment, we could imagine the existence of an up/down gradient in a water column. Fig. \ref{fig:aquatic_scheme} illustrates a situation where three communities are involved.

\begin{figure}[ht]
\centering
\begin{subfigure}[b]{0.48\textwidth}
    \includegraphics[width=\textwidth]{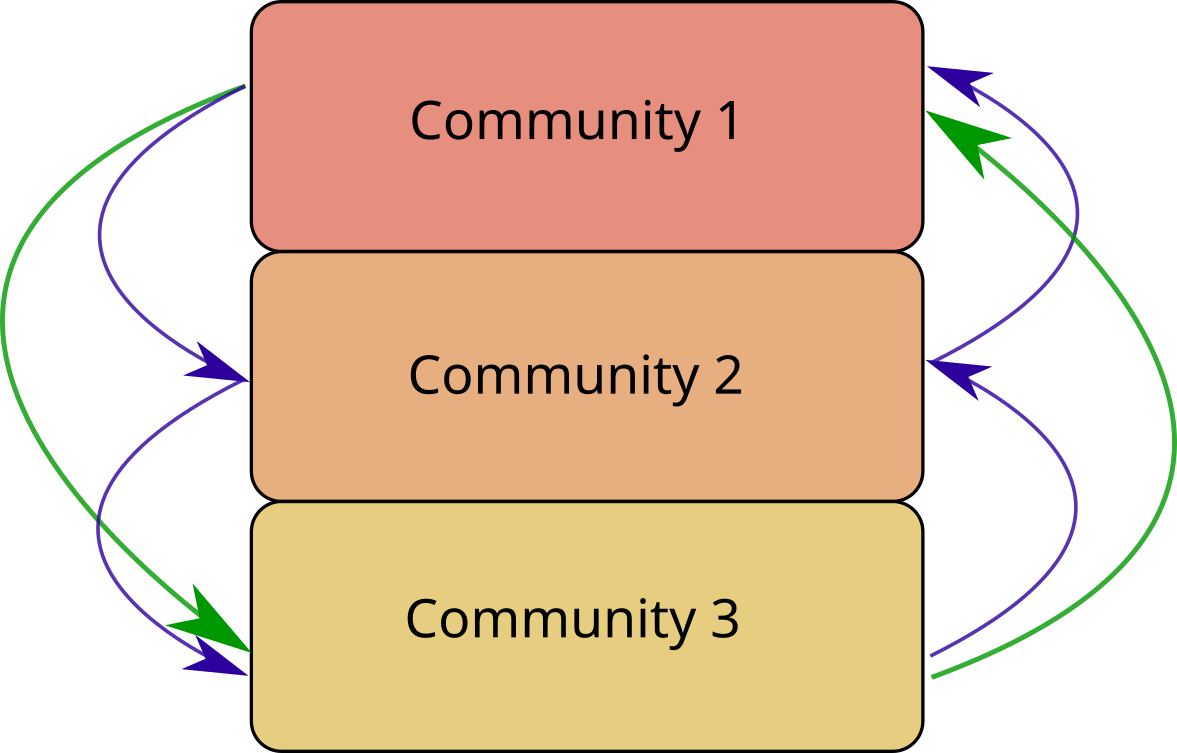}
    \caption{}
    
  \end{subfigure}%
  \hfill   
  \begin{subfigure}[b]{0.48\textwidth}
    \includegraphics[width=\textwidth]{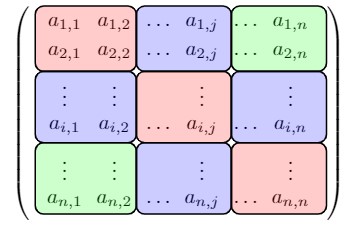}
    \caption{}
    
  \end{subfigure}%
\caption[Illustration of the perspective in the block model framework.]{In (a), a representation of the gradient of interaction between three communities in a water column is represented. The blue arrows correspond to strong interaction strength due to their spatial proximity. On the opposite, the communities 1 and 3 are separated, the green arrow represents a weaker interaction. In (b), the block matrix associated with this type of model is displayed. The colors of the blocks corresponds to the colors of the arrows. The red colored block corresponds to intra-community interactions.}
\label{fig:aquatic_scheme}
\end{figure}

Originally introduced by R.T. Paine \cite{paine_food_1966,paine_pisaster-tegula_1969}, the concept of keystone species is widely used in ecology i.e. one species controls the coexistence of the others and species are lost after the removal of this keystone species. Mouquet \textit{et al.} \cite{mouquet_extending_2013} suggested extending the concept of keystone species to communities. In the block system, one could analyze the existence of a keystone community that would have disproportionately large effect on other communities. In a metacommunity dynamic, Resetarits \textit{et al.} \cite{resetarits_testing_2018} have explored the concept of keystone communities, where some patches have stronger effect on others. 

One could imagine that the same species is present several times in the system, but in different blocks, see Gravel \textit{et al.} \cite{gravel_stability_2016}. In this case, the inter-blocks represent interactions between spatially isolated communities (so should be less strong). If each diagonal or non-diagonal block is a copy of the same interaction pattern (possibly slightly perturbed) and we can add linear effects to the system to represent emigration and immigration, then we could study the feasibility properties of this system. In \cite{gravel_stability_2016}, they found that stability is most likely when dispersal (which controls off-diagonal blocks) is intermediate.

Last but not least, it would be relevant to compare the patterns obtained with data in ecology, as in the recent article by Hu \textit{et al.} \cite{hu_emergent_2022} in the case of a single community.

\section*{Script and code availability}
Script, codes and figures are available online on Github \cite{code}. The code is written in Python.
\newline
\url{https://github.com/maxime-clenet/Impact-of-a-block-structure-on-the-Lotka-Volterra-model}

\section*{Fundings}
M.C., F.M. and J.N. are supported by the CNRS 80 prime project LotKA-VolterRA models: when random maTrix theory meets theoretical Ecology (KARATE).

\section*{Conflict of interest disclosure}
The authors of this preprint declare that they have no financial conflict of interest with the content of this article.
\vfill\pagebreak

\bibliographystyle{alpha}
\bibliography{references}

\newpage

\appendix

\section{Stieljes transform}
\label{app:stieljes}
We provide some reminders regarding Stieltjes transforms, a central element of proofs in random matrix theory. We denote by
$$
\mathbb{C}^{+} := \{ z\in \mathbb{C} \, : \, \mathrm{Im}(z) > 0 \} 
$$
the upper half of the complex plane.
\begin{defi}[Stieltjes transform]
Let $\nu \in \mathcal{P}(\mathbb{R})$ be a probability measure. The Stieltjes transform of $\nu$, denoted by $g_\nu:\mathbb{C}^{+}\rightarrow \mathbb{C}$, is defined by
$$
g_{\nu}(z) = \int \frac{1}{\lambda-z}\nu(d\lambda) \, , z \in \mathbb{C^+} \, .
$$
\end{defi}

\begin{rem}
\label{rem:trace_resolvent}
Let $\nu_X$ be the empirical measure of the eigenvalues $\lambda_1(X),\cdots,\lambda_n(X)$ of the symmetric matrix $X$ $\in \mathcal{M}_n(\mathbb{C})$ define by
\begin{equation*}
    \nu_X := \frac{1}{n}\sum_{k=1}^{n}\delta_{\lambda_k(X)}.
\end{equation*}
then the associated Stieltjes transform is given by
$$
g_{\nu_X}(z) = \int \frac{1}{\lambda-z}\nu_X(d\lambda) = \frac{1}{n} \sum_{i=1}^n \frac{1}{\lambda_i-z} = \frac{1}{n}\mathrm{Tr}\left((X-zI)^{-1}\right) \, ,
$$
where $Q = (X-zI)^{-1}$ is the resolvent of the matrix $X$ and $\mathrm{Tr}(Q)$ is the trace of matrix $Q$.
\end{rem}
\begin{prop}[Stieltjes inversion]
\label{prop:block_stielt_inv}
Let $g_{\nu}$ the Stieltjes transform of the measure $\nu$ of finite mass $\nu(\mathbb{R})$. If $a,b \in \mathbb{R}$ and $\nu(\{a\}) = \nu(\{b\}) = 0$, then
\begin{eqnarray*}
\nu(a,b) &=&  \frac{1}{\pi} \underset{y \rightarrow 0^+}{\lim} \mathrm{Im} \int_a^b g_{\nu}(x+iy)dx \, ,\\
\forall x\in \mathbb{R},\quad \nu(\{x \}) &=& \frac{1}{\pi} \underset{y \rightarrow 0^+}{\lim} \mathrm{Im} (g_{\nu}(x+iy)) \, .
\end{eqnarray*}
\end{prop}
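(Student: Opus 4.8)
The plan is to obtain both identities from Fubini's theorem and the dominated convergence theorem, with the finiteness of $\nu(\mathbb{R})$ supplying the dominating functions; no probabilistic input enters. First I would handle the interval formula. Fix $y>0$; since $(x,\lambda)\mapsto(\lambda-x-iy)^{-1}$ is bounded by $1/y$ on $[a,b]\times\mathbb{R}$ and $\nu$ is finite, Fubini gives $\int_a^b g_\nu(x+iy)\,dx=\int_{\mathbb{R}}\big(\int_a^b(\lambda-x-iy)^{-1}\,dx\big)\,\nu(d\lambda)$. For the inner integral I would use the principal logarithm $\mathrm{Log}$: as $y>0$ keeps the path $x\mapsto(\lambda-x)-iy$ inside the open lower half-plane, $x\mapsto-\mathrm{Log}\big((\lambda-x)-iy\big)$ is an antiderivative of the integrand, so the inner integral equals $\mathrm{Log}\big((\lambda-a)-iy\big)-\mathrm{Log}\big((\lambda-b)-iy\big)$.

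Next I would take imaginary parts. With $\theta_c(y):=\arg\big((\lambda-c)-iy\big)\in(-\pi,0)$ for $c\in\{a,b\}$, one gets $\mathrm{Im}\int_a^b g_\nu(x+iy)\,dx=\int_{\mathbb{R}}\big(\theta_a(y)-\theta_b(y)\big)\,\nu(d\lambda)$, and the integrand is bounded in modulus by $\pi$ uniformly in $y$ and $\lambda$. As $y\downarrow0$, $\theta_c(y)\to0$ when $\lambda>c$, $\theta_c(y)\to-\pi$ when $\lambda<c$, and $\theta_c(y)=-\pi/2$ when $\lambda=c$; hence (for $a<b$) the integrand converges pointwise to $\pi\,\mathbf{1}_{(a,b)}(\lambda)+\tfrac{\pi}{2}\,\mathbf{1}_{\{a,b\}}(\lambda)$. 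Dominated convergence against the finite measure $\nu$ then yields $\lim_{y\downarrow0}\mathrm{Im}\int_a^b g_\nu(x+iy)\,dx=\pi\,\nu((a,b))+\tfrac{\pi}{2}\big(\nu(\{a\})+\nu(\{b\})\big)$, which collapses to $\pi\,\nu(a,b)$ precisely because the hypotheses $\nu(\{a\})=\nu(\{b\})=0$ kill the endpoint terms; this is the first identity.

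For the atom formula I would start from the Poisson representation $\mathrm{Im}\,g_\nu(x+iy)=\int_{\mathbb{R}}y\big((\lambda-x)^2+y^2\big)^{-1}\,\nu(d\lambda)$, obtained by splitting $(\lambda-x-iy)^{-1}$ into real and imaginary parts. Thus $\tfrac1\pi\mathrm{Im}\,g_\nu(x+iy)$ is $\nu$ convolved with the Poisson kernel $P_y(t)=\tfrac1\pi\,y/(t^2+y^2)$, an approximate identity ($\int_{\mathbb{R}}P_y=1$). To single out the mass at $x$ I would multiply by $y$: since $y\,P_y(t)=\tfrac1\pi\,y^2/(t^2+y^2)$ converges pointwise to $\tfrac1\pi\mathbf{1}_{\{0\}}(t)$ and is bounded by $1/\pi$, dominated convergence gives $\tfrac{y}{\pi}\,\mathrm{Im}\,g_\nu(x+iy)\to\tfrac1\pi\,\nu(\{x\})$, i.e. the atom identity in the form $\nu(\{x\})=\lim_{y\downarrow0}\mathrm{Im}\big(y\,g_\nu(x+iy)\big)=\lim_{y\downarrow0}y\,\mathrm{Im}\,g_\nu(x+iy)$ (the factor $y$ is essential, since an atom forces $\mathrm{Im}\,g_\nu(x+iy)$ to blow up like $\nu(\{x\})/y$, so the displayed constant must be read with this $y$-scaling). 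One may alternatively read it off from the interval formula by letting $a\uparrow x$ and $b\downarrow x$ through the at most countably many continuity points of $\nu$ and invoking continuity of the finite measure $\nu$ along the nested intervals $(a,b)\downarrow\{x\}$.

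The step I expect to be the real obstacle is the branch bookkeeping in the second paragraph: one must keep $\mathrm{Log}$ on a single branch along the whole path $x\in[a,b]$ and then correctly identify the boundary values of $\arg\big((\lambda-c)-iy\big)$ as $y\downarrow0$, the endpoint cases $\lambda\in\{a,b\}$ being exactly what the continuity hypotheses $\nu(\{a\})=\nu(\{b\})=0$ are there to neutralize. Everything else is a routine bounded/dominated-convergence estimate against the finite measure $\nu$.
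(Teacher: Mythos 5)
Your argument is correct, and it is the classical proof of this inversion formula; the paper itself does not prove Proposition \ref{prop:block_stielt_inv} but recalls it from the literature (see \cite{bai_spectral_2010}), so there is no competing in-text argument to compare against. Your Fubini step, the antiderivative $x\mapsto-\mathrm{Log}\bigl((\lambda-x)-iy\bigr)$ on the lower half-plane, and the dominated-convergence passage giving the pointwise limit $\pi\,\mathbf{1}_{(a,b)}+\tfrac{\pi}{2}\,\mathbf{1}_{\{a,b\}}$ are all sound, and you correctly isolate the role of the hypotheses $\nu(\{a\})=\nu(\{b\})=0$ in removing the endpoint contributions. Your remark on the atom formula is also well taken and worth recording: as displayed in the statement, the second identity should be read with the $y$-scaling you prove, namely $\nu(\{x\})=\lim_{y\to0^+} y\,\mathrm{Im}\,g_\nu(x+iy)$ (and without the $1/\pi$), since $\tfrac1\pi\lim_{y\to0^+}\mathrm{Im}\,g_\nu(x+iy)$ recovers a density value of $\nu$ at points where it exists rather than the mass of an atom; your Poisson-kernel argument with the bound $y^2/((\lambda-x)^2+y^2)\le 1$ proves exactly the corrected version. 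The only caveat is your alternative suggestion of deducing the atom formula from the interval formula by shrinking $(a,b)$ to $\{x\}$: that route requires justifying an interchange of the limit $y\to0^+$ with the limits $a\uparrow x$, $b\downarrow x$, which your direct kernel argument conveniently avoids, so you should keep the Poisson-kernel proof as the primary one.
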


\section{Numerical methods}

\subsection{Methods for validating heuristics \ref{heur:properties_surviving_species}}

\label{app:num}
To verify the system of equations of heuristics \ref{heur:properties_surviving_species}, the simulations on the properties of surviving species are performed in two distinct methods (see Fig. \ref{fig:block_test}). On the one hand, we use a standard solver (cf. scipy.optimize) to find the theoretical solutions by finding a local minimum of the system of equations (a modification of the Powell hybrid method). On the other hand, we simulate a large number of matrix $B$, each corresponding to an experiment, and we resolve the associated LCP problem using the Lemke's algorithm (see the lemkelcp package \cite{packagelemke}). The empirical solutions are computed using a Monte Carlo experiment, i.e. we use the LCP solution to compute the properties of the surviving species and we make an average over the ensemble of experiments. As a baseline, the dynamics of Lotka-Volterra are achieved by a Runge-Kutta method of order 4 (RK4) implemented in the code.

\subsection{Spectrum: a computer based approach.}

Theorem \ref{th:block_unicite_centered} only provides sufficient conditions for the existence of a unique stable equilibrium and is based on the rough asymptotic upper bound estimation $\Sigma = 2\left \| S \right \|_\infty^{1/2}$. We can assess the sharpness of this bound by comparing it to the limiting spectrum of matrix $H$, which can be plotted via numerical simulations. An efficient way to compute numerically the spectrum of the matrix $H$ comes from the system of non linear equations \eqref{eq:QVE}.

Starting from the QVE equations \eqref{eq:QVE} associated to the matrix $H$, the system takes the simpler form
\begin{equation*}
\begin{cases}
 -\frac{1}{m(z)} &=z + 2\beta_1s_{11}^2 m(z) + \beta_2(s_{12}^2+s_{21}^2)\check m(z)   \\ 
 -\frac{1}{\check m(z)} &=z + \beta_1(s_{12}^2+s_{21}^2)m(z) + 2\beta_2s_{22}^2\check m(z) 
\end{cases}\, ,
\end{equation*}
where for $k \in \mathcal{I}_1$, $m_k(z) = m(z)$ and for $k \in \mathcal{I}_2$, $m_k(z) = \check m(z)$. 
One can indeed check that with $m$ and $\check m$ defined as above, then 
\begin{equation}\label{eq:m-special-form}
\bs{m}^\top=[\underbrace{m,\cdots,m}_{n_1}\,,\, \underbrace{\check m,\cdots,\check m}_{n_2}] 
\end{equation}
satisfies the QVE equations \eqref{eq:QVE}; here $n_1=|{\mathcal I}_1|$ and $n_2=|{\mathcal I}_2|$. Since this solution $\bs{m}$ is unique, $\bs{m}$ is necessarily given by the simplified form \eqref{eq:m-special-form}.

All the knowledge of equation \eqref{eq:QVE} hence relies on functions $m(z)$ and $\check m(z)$.
Then, using the RMT results developed in \cite{ajanki_universality_2017}, the resolvent $G$ of the symmetric matrix $H$ can be approximated by
\begin{equation*}
    G(z) = (H-zI)^{-1} \simeq \mathrm{diag}(m(z)\bs{1}_{\mathcal{I}_1}^{\top},\check m(z)\bs{1}_{\mathcal{I}_2}^{\top}) \,.
\end{equation*}
From Remark \ref{rem:trace_resolvent}, the trace of the resolvent is equal to the Stieltjes transform 
$$
g(z) = \frac{1}{n}\mathrm{Tr}(G) \simeq \beta_1 m(z) + \beta_2 \check m(z)
$$ 
of the spectral measure. Finally, the spectral density can be obtained using the Stieltjes inversion formula, see Prop. \ref{prop:block_stielt_inv}. The spectral density of the matrix $H$ can be computed numerically by an iterative scheme. The initial condition of the two measurements $(m,\check m)$ is $m_0 = \check m_0 =  -\frac{1}{z}$. Then, the iterative scheme 
\begin{equation*}
\begin{cases}
 -\frac{1}{m_p} &=z + 2\beta_1s_{11}^2 m_{p-1} + \beta_2(s_{12}^2+s_{21}^2)\check m_{p-1} \\ 
 -\frac{1}{\check m_p} &=z + \beta_1(s_{12}^2+s_{21}^2)m_{p-1} + 2\beta_2s_{22}^2\check m_{p-1}
\end{cases}\, ,
\end{equation*}
converge to 
$m_{\infty} = \underset{p \rightarrow +\infty}{\lim} m_p$ and $\check m_{\infty} = \underset{p \rightarrow +\infty}{\lim} \check m_p$. The last step consists in using the Stieltjes inversion formule, see Prop.\ref{prop:block_stielt_inv}. 
\begin{rem}
    To handle the Stieltjes inversion (Prop.\ref{prop:block_stielt_inv}) numerically,  it is similar as starting with $z = x+\epsilon i,  \epsilon \approx 10^{-3}$.
\end{rem}
In Fig. \ref{fig:semi-circle}, we present the numerical estimation of the spectral density for different types of interactions of the matrix $H$. 
\begin{figure}[ht]
\centering
\begin{subfigure}{0.48\textwidth}
  \includegraphics[width=\textwidth]{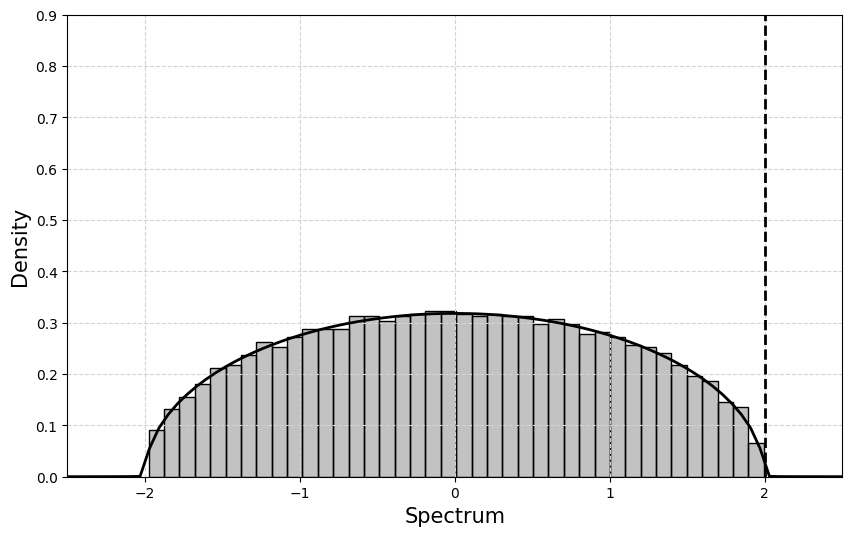}
  \caption{$\bs{\beta} = [1/2, 1/2]$, $\bs{s} = \begin{pmatrix}
1/\sqrt{2} & 1/\sqrt{2} \\ 
1/\sqrt{2} & 1/\sqrt{2}
\end{pmatrix}$
}
\label{subfig:pattern_1}
\end{subfigure}
\hfill
\begin{subfigure}{0.48\textwidth}
  \includegraphics[width=\textwidth]{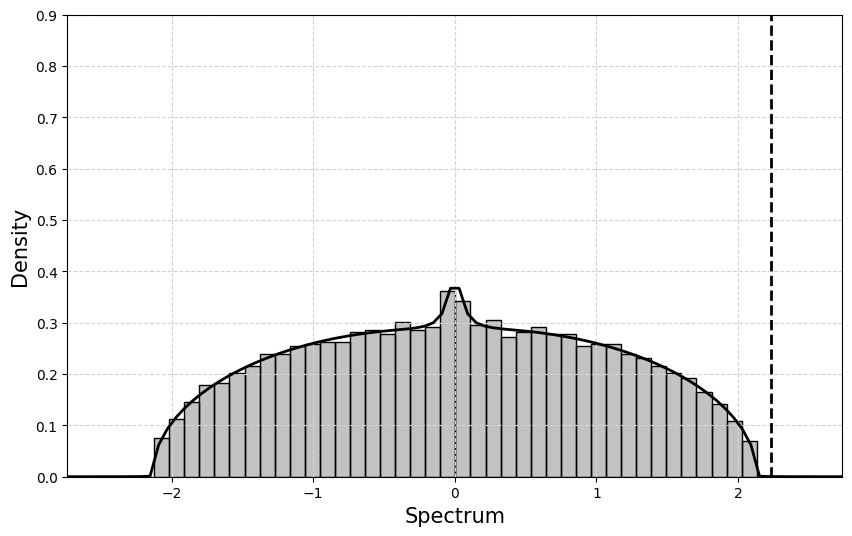}
    \caption{$\bs{\beta} = [1/2, 1/2]$, $\bs{s} = \begin{pmatrix}
1/2 & 1 \\ 
1 & 1/8
\end{pmatrix}$
}
\label{subfig:pattern_2}
\end{subfigure}
\hfill
\begin{subfigure}{0.48\textwidth}
  \includegraphics[width=\textwidth]{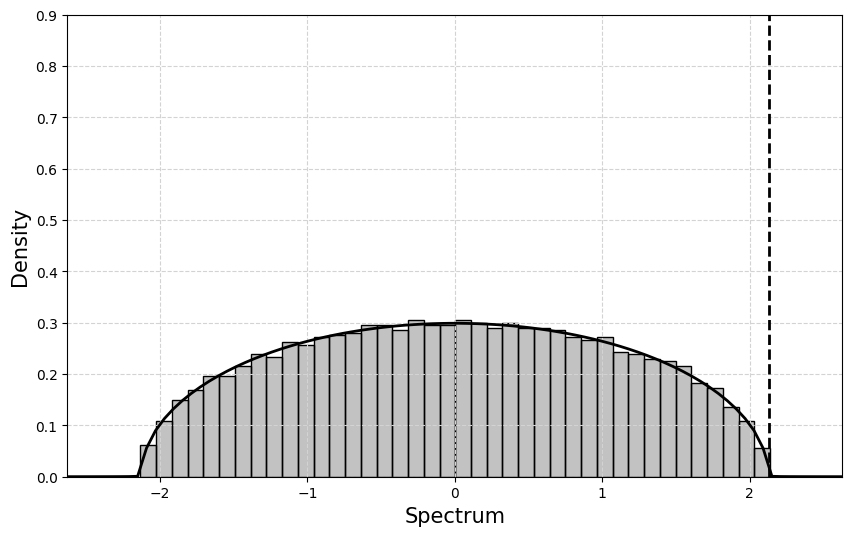}
  \caption{$\bs{\beta} = [1/2, 1/2]$, $\bs{s} = \begin{pmatrix}
1 & 1/2 \\ 
1/8 & 1
\end{pmatrix}$
}
\label{subfig:pattern_3}
\end{subfigure}
\hfill
\begin{subfigure}{0.48\textwidth}
  \includegraphics[width=\textwidth]{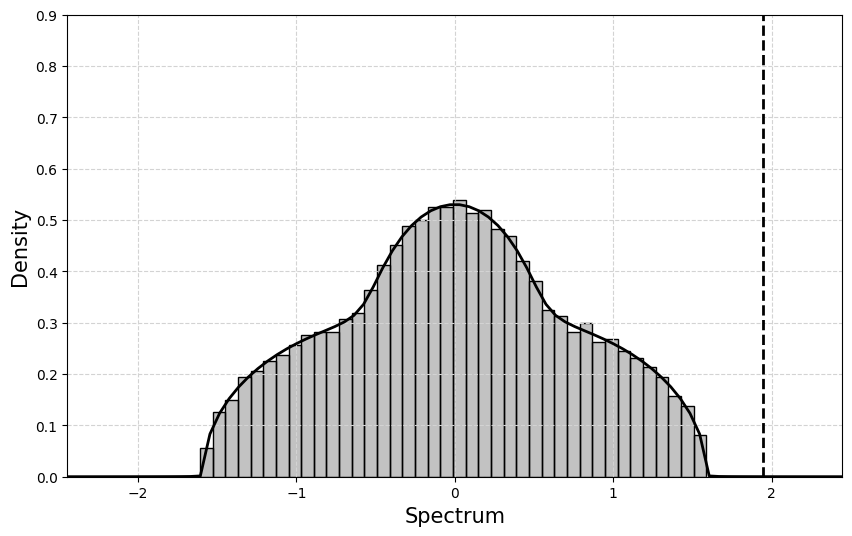}
    \caption{$\bs{\beta} = [3/4, 1/4]$, $\bs{s} = \begin{pmatrix}
1/3 & 1/5 \\ 
1 & 1/2
\end{pmatrix}$
}
\label{subfig:pattern_4}
\end{subfigure}
\caption{Spectrum (histogram) of the Hermitian random matrix $H$ ($n=1000$), conditions on $(\bs{\beta},\bs{s})$ are given in each panel. The numerical approach is used to compute the solid line spectrum distribution. An upper bound for the largest eigenvalue of $H$, given by $2\left \| S \right \|_\infty^{1/2}$, is denoted by the dashed vertical line.}
\label{fig:semi-circle}
\end{figure}

\section{Remaining computations}
\label{app:proof_heuristics}
\subsection{Moments of $\check{Z}_k$}

We compute hereafter the conditional variance of $\check{Z}_k=(B\bs{x}^*)_k$ with respect to $\bs{x}^*$. It should be noted that the assumption of independence between ($x^*_k$) and ($B_{kl}$) is a strong one, although it is likely to be true asymptotically, supported by the chaos hypothesis.

We rely on the following identities $\forall k \in \mathcal{I}_i$, $\forall \ell \in \mathcal{I}_j$ and $\forall o \in \mathcal{I}_q$:
$$
\mathbb{E} B_{k\ell} = 0 \quad,\quad \mathbb{E} (B_{k\ell}^2) = \frac {s_{ij}^2}{ n } \quad,\quad \mathbb{E} B_{k\ell}B_{ko} = 0 \quad (\ell \neq o)\, .
$$
We first compute the conditional mean:
\begin{align*}
\forall k \in \mathcal{I}_i, \ \mathbb{E}_{\bs{x}^*}(\check{Z}_k) &= \sum_{\ell\in[n]}  \mathbb{E}(B_{k\ell}) x_\ell^*
= \sum_{\ell\in {\mathcal S}_1}  \mathbb{E}(B_{k\ell}) x_\ell^*+\sum_{\ell\in {\mathcal S}_2}  \mathbb{E}(B_{k\ell}) x_\ell^* =  0\, .
\end{align*}
We now compute the second moment:
\begin{eqnarray*}
\forall k \in \mathcal{I}_i, \ \mathbb{E}_{\bs{x}^*}(\check{Z}_k^2) &=&\mathbb{E}_{\bs{x}^*} \left( \sum_{\ell\in [n]}  B_{k\ell} x_\ell^{*}\right)^2
    \ =\  \mathbb{E}_{\bs{x}^*}  \sum_{\ell,o\in [n]} B_{k\ell}B_{ko}x_\ell^*x_o^*\,,\\ 
    &=& \sum_{\ell \in \mathcal{S}_1 \cup \mathcal{S}_2}\mathbb{E}(B_{k\ell}^2)x_\ell^{*2}+\sum_{\ell \neq o}\mathbb{E} (B_{k\ell}B_{ko})x_\ell^*x_o^*, \\
    &= &\frac{s_{i1}^2}{ n} \sum_{\ell \in \mathcal{S}_1}x_\ell^{*2}+\frac{s_{i2}^2}{ n} \sum_{\ell \in \mathcal{S}_2}x_\ell^{*2} \, ,\\
    &=& \beta_1 \hat{p}_1 \hat{\sigma}^2_1s_{i1}^2+\beta_2 \hat{p}_2 \hat{\sigma}^2_2s_{i2}^2 \, , 
\end{eqnarray*}
We can now compute the variance:
$$
\forall  \, k \in \mathcal{I}_i, \ \textrm{Var}_{\bs{x}^*} \left(\check{Z}_k\right) =   \mathbb{E}_{\bs{x}^*}\left(\check{Z}_k^2\right) -  \left( \mathbb{E}_{\bs{x}^*}\check{Z}_k\right)^2 \ =\ \beta_1 \hat{p}_1 \hat{\sigma}^2_1s_{i1}^2+\beta_2 \hat{p}_2 \hat{\sigma}^2_2s_{i2}^2\, .
$$

\subsection{Details of heuristics of the mean}
\label{app:heur_mean}
Our starting point is the following generic representation of an abundance at equilibrium (either of a surviving or vanishing species) in the case $k \in \mathcal{S}_i$:
\begin{equation*}
x_k^* = \left( 1 +\Delta^*_i Z_k \right)\bs{1}_{\{Z_k>  \delta^*_i\}} = \bs{1}_{\{Z_k>  \delta^*_i\}} +\left( \Delta^*_i Z_k \right)\bs{1}_{\{Z_k>  \delta^*_i\}} \, .
\end{equation*}
Summing over ${\mathcal S_i}$ and normalizing, 
\begin{equation*}
\begin{split}
\frac{1}{|{\mathcal S}_i|}\sum_{k \in \mathcal{S}_i}x_k^{*} &=\frac{1}{|{\mathcal S}_i|}\sum_{k \in \mathcal{S}_i}\bs{1}_{\{Z_k > \delta^*_i\}}+\Delta^*_i\frac{1}{|{\mathcal S}_i|}\sum_{k \in \mathcal{S}_i}Z_k\bs{1}_{\{Z_k > \delta^*_i\}}, \\
\mhat_i &\stackrel{(a)}= 1+\Delta^*_i\frac{|\mathcal{I}_i|}{|{\mathcal S}_i|}\frac{1}{|\mathcal{I}_i|}\sum_{k \in \mathcal{I}_i}Z_k\bs{1}_{\{Z_k > \delta^*_i\}}, \\
\mhat_i &\stackrel{(b)}\simeq 1+\Delta^*_i \frac{1}{\mathbb{P}(Z > \delta^*_i)}\mathbb{E}(Z\bs{1}_{\{Z > \delta_i^*\}}), \\
\mhat_i &\simeq 1+\Delta^*_i\mathbb{E}(Z \mid Z > \delta^*_i).
\end{split}
\end{equation*}
where $(a)$ follows from the fact that $|{\mathcal S}_i| = \sum_{k \in \mathcal{S}_i}\bs{1}_{\{Z_k > \delta^*_i \}}$ (by definition of ${\mathcal S}_i$), $(b)$ from the law of large numbers $\frac{1}{|\mathcal{I}_i|} \sum_{k\in \mathcal{I}_i} Z_k \bs{1}_{\{Z_k>\delta_i\}} \xrightarrow[n\to\infty]{} \mathbb{E}Z \bs{1}_{\{Z>\delta^*_i\}}$ and $\frac{|{\mathcal S}_i|}{|\mathcal{I}_i|} \xrightarrow[n\to\infty]{} \mathbb{P}(Z>\delta^*_i)$ with $Z\sim{\mathcal N}(0,1)$. It remains to replace $\mhat_i$ by its limit $m^*_i$ to obtain the heuristics of the mean:
\begin{equation*}
      m^*_1 = 1+\Delta^*_1 \mathbb{E}(Z \mid Z>\delta^*_1)\, , 
\end{equation*}
\begin{equation*}
    m^*_2 = 1+\Delta^*_2 \mathbb{E}(Z \mid Z>\delta^*_2)\, .
\end{equation*}

\subsection{Density of the distribution of the surviving species.}
Assume that $x^*>0$, and let $f=\mathbb{R}\to \mathbb{R}$ be a bounded continuous test function, then $\forall k\in \mathcal{S}_i$
\begin{eqnarray*}
    \mathbb{E}f(x_k^*) &=& \mathbf{E}\left[f\left(1+\Delta^*_i Z_k \right)\  \bigg|\  Z_k > \delta^*_i \right]\ , \\
    &=& \int_{-\infty}^{\infty}f\left(1 +\Delta^*_i u\right) \frac{\bs{1}_{\{u>\delta_i^*\}}}{1-\Phi(\delta_i^*)}\frac{e^{-\frac{u^2}{2}}}{\sqrt{2\pi}}du\ , \\
    &=& \int_{0}^{\infty}f(y) e^{-\frac 12\left(\frac{y}{\Delta^*_i}+\delta_i^*\right)^2}\frac{1}{\sqrt{2\pi} \Phi(-\delta_i^*)\Delta^*_i}\,dy\ ,
\end{eqnarray*}
hence the density of $x^*_k, \, \forall k \ \in \mathcal{S}_i$ .

\section{Sketch of proof of Theorem \ref{th:fea_2b}}
\label{app:proof_feasibility_2b}
The first step consists in decomposing the equilibrium $\bs{x}^*$:
\begin{eqnarray*}
    x^*_k &=& e_k^{\top} \bs{x^*} \quad =\quad  e_k^{\top}(I-B)^{-1}\bs{1} \quad
    = \quad \sum_{\ell = 0}^{\infty} e_k^{\top} B^{\ell} \bs{1} \quad =\quad 
     1 + e_k^{\top}  B\bs{1}+e_k^{\top} B^2(I-B)^{-1} \bs{1} \, , \\
    &=& 1 + Z_k + R_k \, ,
\end{eqnarray*}
where $Z_k = \sum_{\ell=1}^n B_{k\ell} \ , \ \forall k \in [n]$.

One can prove that $\forall k \in [n], \, R_k$ is a negligible term if $n$ is sufficiently large. From a technical point, it relies on Gaussian concentration of Lipschitz functionnals and we are confident that the techniques applied in \cite{bizeul_positive_2021} will succeed in handling $R_k$. However, this part of the proof is not been treated here since we want to stick a concise argumentation of the proof which gives the reader information about the critical bound of the feasibility threshold.

The feasibility of the two communities is studied independently. Using Gaussian addition properties, a simpler form of $Z_k$ is first deduced. Consider a family $(\check{Z}_k)_{k\in[n]}$ of i.i.d. random variables $\mathcal{N}(0, 1)$. 
\begin{align*}
    \text{If }k\in \mathcal{I}_1, \ Z_k &= \sum_{\ell \in \mathcal{I}_1}B_{k\ell}+\sum_{\ell \in \mathcal{I}_2}B_{k\ell} \, , \\
    &\sim \mathcal{N}\left(0\,,\, \beta_1 s_{11}^2\right) + \mathcal{N}\left(0\,,\,\beta_2s_{12}^2\right) \, , \\
    &\sim \sqrt{\beta_1s_{11}^2+\beta_2s_{12}^2}\check{Z}_k\,.
\end{align*}
Similarly
\begin{equation*}
    \text{if }k\in \mathcal{I}_2, \ Z_k \sim \sqrt{\beta_1s_{21}^2+\beta_2s_{22}^2}\check{Z}_k\,.
\end{equation*}
Given $\bs{\beta} = (\beta_1,\beta_2)$, conditions on the matrix $\bs{s}$ are inferred to have:
\begin{equation}
    \mathbb{P}\left(\underset{k\in[n] }{\min} \ x_k > 0\right) = 1 \quad \Leftrightarrow \quad \mathbb{P}\left(\underset{k\in[n] }{\min} \ Z_k > -1\right) = 1\,.
    \label{eq:fea_min_x}
\end{equation}
In order to compute a tractable form of $\underset{k\in[n] }{\min} \ Z_k$, an additional approximation is made. Since  $(\check{Z}_k)_{k\in[n]}$ is a family of i.i.d. random variables $\mathcal{N}(0, 1)$, using standard extreme value theory of Gaussian random variables (see Leadbetter \textit{et al.} \cite{leadbetter_extremes_1983}), if $n$ is large enough 
\begin{equation}
\label{eq:approx_log}
\forall i \in \{1,2\}, \quad
\underset{k\in \mathcal{I}_i }{\min} \ \Check{Z}_k \sim - \sqrt{2\log(\beta_in)} \approx -\sqrt{2\log(n)}  \end{equation}

\begin{align*}
    \underset{k\in[n] }{\min} \ Z_k &= \min \left(\sqrt{\beta_1s_{11}^2+\beta_2s_{12}^2}\underset{k\in \mathcal{I}_1 }{\min} \ \check{Z}_k,\sqrt{\beta_1s_{21}^2+\beta_2 s_{22}^2}\underset{k\in \mathcal{I}_2 }{\min} \ \check{Z}_k \right)\, , \\
     & \stackrel{(a)}\simeq \min \left(\sqrt{\beta_1s_{11}^2+\beta_2s_{12}^2}\left(-\sqrt{2\log(n)}\right),\sqrt{\beta_1s_{21}^2+\beta_2 s_{22}^2}\left(-\sqrt{2\log(n)}\right)\right)\, , \\
    &= \min \left(-\sqrt{2\beta_1s_{11}^2\log(n)+2\beta_2s_{12}^2\log(n)},-\sqrt{2\beta_1s^2_{21}\log(n)+2\beta_2s_{22}^2\log(n)}\right)\, , \\
    &= - \max \left(\sqrt{2\beta_1s_{11}^2\log(n)+2\beta_2s_{12}^2\log(n)},\sqrt{2\beta_1s_{21}^2\log(n)+2\beta_2s_{22}^2\log(n)}\right)\, .
\end{align*}
where $(a)$ comes from the approximation \eqref{eq:approx_log}. The condition $\underset{k\in[n] }{\min} \ Z_k > -1$ asymptotically boils down to
\begin{align*}
&  \max \left(\sqrt{2\beta_1s_{11}^2\log(n)+2\beta_2s_{12}^2\log(n)},\sqrt{2\beta_1s_{21}^2\log(n)+2\beta_2s_{22}^2\log(n)}\right) < 1 \, , \\
\Leftrightarrow \quad &\max \left(2\beta_1s_{11}^2\log(n)+2\beta_2s_{12}^2\log(n),2\beta_1s_{21}^2\log(n)+2\beta_2s_{22}^2\log(n)\right) < 1 \, , \\
\Leftrightarrow \quad &\max \left(\beta_1s_{11}^2+\beta_2s_{12}^2,\beta_1s_{21}^2+\beta_2s_{22}^2\right) < \frac{1}{2\log(n)}\, , \\
\Leftrightarrow \quad &\left \| (\bs{s_n}\circ\bs{s_n}) \bs{\beta}^\top \right \|_{\infty} < \frac{1}{2\log(n)} := (s_n^*)^2\, .
\end{align*}

\section{Extension to the $b$-blocks model}
\label{app:extension_b_block}
\subsection{Interaction matrix with $b$ communities}
Within the framework of $b$ communities, the matrix $B = (B_{k\ell})_{n,n}$ is defined as

\begin{equation}
B = V\bs{s}V^{\top}\circ \frac{1}{\sqrt{n}}A \, ,
\label{eq:mat_int_extended}
\end{equation}
where
$$ V \in \mathcal{M}_{n\times b}, \,
V = \begin{pmatrix}
\bs{1}_{\mathcal{I}_1} & 0 & \cdots  & 0\\ 
0 & \bs{1}_{\mathcal{I}_2} & \cdots  & 0\\ 
\vdots  & \vdots  & \ddots  &\vdots  \\ 
0 & 0 & \cdots  & \bs{1}_{\mathcal{I}_b}
\end{pmatrix} \, , \, A = \begin{pmatrix}
A_{11} & \cdots  & A_{1b}\\ 
\vdots & \ddots  & \vdots   \\ 
A_{b1} & \cdots &  A_{bb}
\end{pmatrix} \, , \bs{s} = \begin{pmatrix}
s_{11} & \cdots  & s_{1b}\\ 
\vdots & \ddots  & \vdots   \\ 
s_{b1} & \cdots &  s_{bb}
\end{pmatrix} \, ,
$$
where $\mathcal{I}_1 = [n_1]$, ${\mathcal I}_2=\{n_1 +1,\cdots, n_1+n_2\}$,$\dots$, ${\mathcal I}_b=\{n_1+\dots+n_{b-1} +1,\cdots, n\}$  the subset of $[n]$ of size $|\mathcal{I}_i|:=n_i$ matching the index of species belonging to community $i$ and $\bs{\beta}  = (\beta_1,\beta_2,..,\beta_b)$ with $\forall i \in [b], \, \beta_i = n_i/n$ and $\sum_{i=1}^b \beta_i = 1$. 
The random matrix $A_{ij}$ is non-Hermitian of size $|\mathcal{I}_i|\times|\mathcal{I}_j|$ with standard Gaussian entries i.e. $\mathcal{N}(0,1)$. Recall that $\bs{1}_{\mathcal{I}_i}$ be the element-wise vector of $1$ with size $|\mathcal{I}_i|$.
\subsection{Existence of a unique equilibrium}
Let $H$ be the symmetric matrix
\begin{equation*}
    H = B+B^\top=\frac{1}{\sqrt{n}} \begin{pmatrix}
H_{11} & \cdots  & H_{1b}\\ 
\vdots  & \ddots  & \vdots  \\ 
H_{b1} & \cdots  & H_{bb} 
\end{pmatrix}\, ,
\end{equation*}
where $\forall \, i,j \in [b],\, H_{ij}$ is a matrix of size $|\mathcal{I}_i|\times|\mathcal{I}_j|$ and each off-diagonal entries follow a Gaussian distribution $\mathcal{N}(0,s_{ij}^2+s_{ji}^2)$.
The QVE associated to the matrix $H$ is decomposed as
\begin{equation*}
    k \in \mathcal{I}_i\, ,\, -\frac{1}{m_k(z)} = z + \sum_{j=1}^b \sum_{\ell \in \mathcal{I}_j}\frac{1}{n}\left(s_{ij}^2+s_{ji}^2\right)  m_\ell(z)\, .
\end{equation*}
Given $\bs{m}(z) = (m_1(z),\cdots,m_n(z))$, denote by $1/\bs{m}(z) = (1/m_1(z),\cdots,1/m_n(z))$  and $S = \frac{1}{n}V(\bs{s}+\bs{s}^{\top})V^{\top}$
the QVE can be written in the standard form
\begin{equation}
    -\frac{1}{\bs{m}(z)} = z+S\bs{m}(z)\, .
    \label{eq:b_block_QVE}
\end{equation}

Following the same arguments as in Theorem \ref{th:block_unicite_centered} and relying on Theorem \ref{th:block_takeuchi}, given the particular shape of the matrix $S$, computing its norm is equivalent to computing the norm of a matrix of size $b$
\begin{equation*}
\left \| S \right \|_\infty = \left \|\mathrm{diag}(\bs{\beta}) \left((\bs{s}\circ\bs{s})+(\bs{s}\circ\bs{s})^{\top}\right) \right \|_\infty \, .
\end{equation*}
\subsection{Surviving species}
\begin{heur}
\label{heur:properties_b_blocs}
Let $\bs{s}$ be the $b \times b$ matrix of interaction strengths and assume that the condition of Theorem \ref{th:block_unicite_centered} holds, then the following system of $2b$ equations and $2b$ unknowns $\bs{p} = (p_1,p_2,..,p_b), \, \bs{\sigma}=(\sigma_1,\sigma_2,..,\sigma_b)$
\begin{eqnarray*}
\forall i \in [b]\, ,\, 
p_i &=&1- \Phi(\delta_i) \ , \\
\forall i \in [b]\, ,\,
(\sigma_i)^2 &=&1+2\Delta_i \mathbb{E}(Z|Z>\delta_i)+\Delta_i^2\mathbb{E}(Z^2|Z>\delta_i) \ , 
\end{eqnarray*}
where
\begin{equation*}
    \Delta_i = \sqrt{\sum_{j=1}^b p_j (\sigma_j)^2\beta_j s_{ij}^2} \text{ ; } \delta_i = -\frac{1}{\Delta_i}, 
\end{equation*}
admits a unique solution $(\bs{p^*},\, \bs{\sigma^*})$ and $\forall \ i \in [b]$
 $$
 \hat{p}_i \xrightarrow[n\to\infty]{a.s.} p_i^* \qquad \text{and}\qquad \hat{\sigma}_i\xrightarrow[n\to\infty]{a.s.} \sigma^*_i\, .
$$
\end{heur}
\subsection{Distribution of the surviving species}
Let $\bs{s}$ be the $b \times b$ matrix of interaction strengths and assume that the condition of Theorem \ref{th:block_unicite_centered} holds. Let $\bs{x}^*$ the solution of \eqref{eq:block_equilibrium-NI} and $(\bs{p^*},\, \bs{\sigma^*})$ the solution of the heuristic \ref{heur:properties_b_blocs}. Recall the definition of $\Delta_i,\, \delta_i$ and denote by $\delta_i^* = \delta_i(p_i^*,\sigma_i^*)$. Let $x^*_k>0$ a positive component of $\bs{x}^*$ belonging to the community $i$, then:
$$
\mathcal{L}(x_k^*) \xrightarrow[n\to\infty]{} \mathcal{L}\left(1+\Delta^*_i Z \quad \bigg|\quad Z > \delta_i^* \right)\ ,
$$
where $Z\sim {\mathcal N}(0,1)$. Otherwise stated, asymptotically $ \forall k \in \mathcal{S}_i, \, x^*_k$ admits the following density
\begin{equation}
 f_k(y) = \frac{\bs{1}_{\{y>0\}}}{\Phi(-\delta^*_i)}\frac{1}{\Delta^*_i\sqrt{2\pi}}\,
 \exp\left\{-\frac 12\left(\frac{y}{\Delta^*_i}+\delta^*_i\right)^2\right\}\ .
\end{equation}

\subsection{Feasibility}
We consider a growing scaling matrix 
$$
\bs{s}_n \xrightarrow[n\to\infty]{} \bs{0} \quad \Leftrightarrow \quad \forall i,j \in \{1,b\}, s_{ij} \xrightarrow[n\to\infty]{} 0 \, .
$$
Let $B_n$ a matrix defined by
\begin{equation}
B_n = V\bs{s}_nV^{\top}\circ\frac{1}{\sqrt{n}}A \, .
\label{eq:mat_int_extended_fea}
\end{equation}
The spectral radius of $\frac{1}{\sqrt{n}}A$ a.s. converges to $1$ (circular law). So as long as $\bs{s}_n$ is close to zero, the matrix $I - B_n$ is eventually invertible.

Recall the problem which admits a unique solution defined by
\begin{equation}
    \bs{x^*} = \bs{1}+B\bs{x^*} \ \Leftrightarrow \ \bs{x^*} = (I-B)^{-1} \bs{1}\, ,
    \label{eq:fea_fixed_b_blocks}
\end{equation}

\begin{theo}[Co-feasibility for the $b$-blocks model]
Assume that matrix $B_n$ is defined by the $b$-blocks model \eqref{eq:mat_int_extended_fea}. Let $\bs{\beta} = (\beta_1,\beta_2,..,\beta_b), \, \sum_{i=1}^b \beta_i = 1$ represents the proportion of each community. Let $\bs{s}_n \xrightarrow[n\to\infty]{} 0$ and denote by $s_n^*=1/\sqrt{2\log n}$. Let $x_n = (x_k)_{k\in[n]}$ be the
solution of \eqref{eq:fea_fixed_b_blocks}.
\begin{enumerate}
    \item If there exists $\varepsilon > 0$ such that eventually $\left \| (\bs{s_n}\circ \bs{s_n}) \bs{\beta}^\top \right \|_{\infty} \geq (1+\varepsilon)(s_n^*)^2$ then
    $$
    \mathbb{P}\left \{ \underset{k\in [n]}{\min}\, x_k>0 \right \} \xrightarrow[n\rightarrow \infty]{} 0 \, .
    $$
    \item If there exists $\varepsilon > 0$ such that eventually $\left \| (\bs{s_n}\circ \bs{s_n}) \bs{\beta}^\top \right \|_{\infty} \leq (1-\varepsilon)(s_n^*)^2$ then
    $$
    \mathbb{P}\left \{ \underset{k\in [n]}{\min}\, x_k>0 \right \} \xrightarrow[n\rightarrow \infty]{} 1 \, .
    $$
\end{enumerate}
\end{theo}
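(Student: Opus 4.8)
The plan is to follow, almost verbatim, the sketch of proof of Theorem~\ref{th:fea_2b} given in Appendix~\ref{app:proof_feasibility_2b}, simply replacing the bookkeeping over the two communities $\mathcal I_1,\mathcal I_2$ by a maximum (resp.\ minimum) over the $b$ fixed communities $\mathcal I_1,\dots,\mathcal I_b$: no new phenomenon appears, only heavier notation, and $b$ being fixed all the estimates are uniform. First I would record that $I-B_n$ is a.s.\ eventually invertible (as noted after \eqref{eq:mat_int_extended_fea}: the circular law \cite{tao_random_2010} gives $\rho(n^{-1/2}A)\to1$, while $\bs s_n\to\bs 0$ makes $\|B_n\|\to 0$ a.s., since $B_n$ is the Hadamard product of $n^{-1/2}A$ with a matrix of sup-norm $\max_{ij}|s_{ij,n}|\to 0$), so that $(I-B_n)^{-1}=\sum_{\ell\ge 0}B_n^\ell$ is well defined with $\|(I-B_n)^{-1}\|$ a.s.\ eventually bounded. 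Then, for each $k\in[n]$, decompose the unique solution of \eqref{eq:fea_fixed_b_blocks} as
$$
x_k^{*}=e_k^{\top}(I-B_n)^{-1}\bs 1=1+Z_k+R_k,\qquad
Z_k=\sum_{\ell=1}^n (B_n)_{k\ell},\quad R_k=e_k^{\top}B_n^2(I-B_n)^{-1}\bs 1 .
$$

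Second, the Gaussian structure makes $Z_k$ explicit: if $k\in\mathcal I_i$ then $Z_k$ is a sum of $n$ independent centred Gaussians and
$$
Z_k\sim\mathcal N\!\Big(0,\ \textstyle\sum_{j=1}^b\beta_j s_{ij,n}^2\Big)\sim c_i\,\check Z_k,\qquad c_i:=\Big(\textstyle\sum_{j=1}^b\beta_j s_{ij,n}^2\Big)^{1/2},
$$
with $(\check Z_k)_{k\in[n]}$ i.i.d.\ $\mathcal N(0,1)$; note $\max_{i\in[b]}c_i^2=\|(\bs{s}_n\circ\bs{s}_n)\bs{\beta}^{\top}\|_\infty$. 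Hence the quantity governing co-feasibility is $\min_{k\in[n]}Z_k=\min_{i\in[b]}c_i M_i$ with $M_i:=\min_{k\in\mathcal I_i}\check Z_k<0$. The extreme-value input, standard for i.i.d.\ Gaussians \cite{leadbetter_extremes_1983}, is the two-sided concentration: for every fixed $\eta>0$,
$$
\mathbb P\big(|M_i|\ge(1-\eta)\sqrt{2\log n}\ \text{ for all }i\in[b]\big)\xrightarrow[n\to\infty]{}1,\qquad
\mathbb P\big(|M_i|\le(1+\eta)\sqrt{2\log n}\ \text{ for all }i\in[b]\big)\xrightarrow[n\to\infty]{}1,
$$
which follows from $\mathbb P(\check Z_1\le a)^{\beta_i n}=(1-\bar\Phi(a))^{\beta_i n}$, the Gaussian tail $\bar\Phi(a):=1-\Phi(a)\asymp a^{-1}e^{-a^2/2}$, and a union bound (taking $a=(1\mp\eta)\sqrt{2\log n}$ makes $\beta_i n\,\bar\Phi(a)$ tend to $+\infty$, resp.\ $0$). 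Writing $|M_i|=(1+o(1))\sqrt{2\log n}$ and using $\min_{k}Z_k>-1\iff\max_i c_i|M_i|<1$, case~(1) is settled by picking $i$ with $c_i^2\ge(1+\varepsilon)(s_n^{*})^2$ and $\eta$ with $\sqrt{1+\varepsilon}(1-\eta)>1$ (so $c_i|M_i|>1$ with probability $\to1$), and case~(2) by using $c_i\le\sqrt{1-\varepsilon}\,s_n^{*}$ for all $i$ together with $\eta$ satisfying $\sqrt{1-\varepsilon}(1+\eta)<1$ (so $\max_i c_i|M_i|<1$, i.e.\ $\min_k Z_k>-1$, with probability $\to1$). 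This is exactly the chain of equivalences ending Appendix~\ref{app:proof_feasibility_2b}, now with $b$ blocks.

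Third, it remains to justify $\mathbb P\{\min_k x_k^{*}>0\}=\mathbb P\{\min_k Z_k>-1\}+o(1)$, i.e.\ that the remainder $\max_{k\in[n]}|R_k|$ is negligible on the relevant scale (in regimes (1)--(2) the gap between $|\min_k Z_k|$ and $1$ is of order $1$, so it suffices that $\max_k|R_k|\to 0$ in probability). I expect this to be the only real obstacle: the crude bound $|R_k|\le\|B_n^2(I-B_n)^{-1}\|\,\|\bs 1\|=O(\sqrt n/\log n)$ is useless and one must exploit cancellations. As in \cite{bizeul_positive_2021}, I would view $R_k$ as a function of the i.i.d.\ Gaussian entries of $A$, bound its Lipschitz constant (small thanks to the $n^{-1/2}$ scaling and the two extra factors of $B_n$), apply the Gaussian concentration inequality to obtain sub-Gaussian fluctuations around $\mathbb E R_k$, estimate $\mathbb E|R_k|$ directly, and conclude by a union bound over $k\in[n]$. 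Since $b$ is fixed this is word-for-word the argument behind Theorem~\ref{th:fea_2b}; as in Appendix~\ref{app:proof_feasibility_2b} we only record the statement and do not reproduce the concentration estimates. Granting $\max_k|R_k|\to 0$, items (1) and (2) follow by combining it with the extreme-value dichotomy above, and in regime (2) Theorem~\ref{th:block_unicite_centered} furthermore guarantees that the feasible solution is the unique globally stable equilibrium of \eqref{eq:LV2}.
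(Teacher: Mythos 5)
Your proposal follows essentially the same route as the paper's sketch: the decomposition $x_k^*=1+Z_k+R_k$, the reduction of $Z_k$ to $c_i\check Z_k$ with $c_i^2=\sum_j\beta_j s_{ij,n}^2$, Gaussian extreme-value asymptotics for $\min_{k\in\mathcal I_i}\check Z_k$, and deferral of the control of $R_k$ to the Gaussian-concentration argument of Bizeul and Najim. Your version is slightly more careful on the extreme-value step (two-sided concentration with an explicit $\eta$ absorbed by the $\varepsilon$ in the hypotheses), but this is a refinement of the same argument, not a different one.
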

\begin{proof}[Sketch of proof]
Starting from the decomposition, the equilibrium $\bs{x}^*$:
$$
x^*_k = 1 + Z_k + R_k \, ,
$$
where $Z_k = \sum_{\ell=1}^n B_{k\ell} \ , \ \forall \, k  \in [n]$ \ and we assume that $\forall \, k \in [n]\,, \, R_k$ is a negligible term if $n$ is sufficiently large.

The feasibility of the $b$ communities is studied independently. Using Gaussian addition properties, a simpler form of $Z_k$ is derived. Consider a family $(\check{Z}_k)_{k\in[n]}$ of i.i.d. random variables $\mathcal{N}(0, 1)$. 
$$
\text{If }k\in \mathcal{I}_i,\quad Z_k \quad =\quad  \sum_{j=1}^b \sum_{\ell \in \mathcal{I}_j}B_{k\ell} \quad 
    \sim \quad \sum_{j=1}^b \mathcal{N}\left(0\,,\, \beta_js_{ij}^2\right) 
    \quad \sim\quad  \sqrt{\sum_{j=1}^b \beta_j s_{ij}^2}\check{Z}_k\,.
$$
Given $\bs{\beta} = (\beta_1,\beta_2,..,\beta_b)$, conditions on the matrix $\bs{s}$ are inferred to have
\begin{equation*}
    \mathbb{P}(\underset{k\in[n] }{\min} \ x_k > 0) = 1 \quad \Leftrightarrow\quad  \mathbb{P}(\underset{k\in[n] }{\min} \ Z_k > -1) = 1\,.
    \label{eq:fea_min_x_multi}
\end{equation*}
In order to compute a tractable form of $\underset{k\in[n] }{\min} \ Z_k$, an additional approximation is made, if $n$ is large enough 
\begin{equation}
\label{eq:approx_log_multi}
\underset{k\in \mathcal{I}_i }{\min} \ \Check{Z}_k \quad \sim\quad  - \sqrt{2\log(\beta_in)}\quad  \simeq\quad  -\sqrt{2\log(n)} \, .   
\end{equation}
With this approximation at hand, we can proceed:
\begin{eqnarray*}
    \underset{k\in[n] }{\min} \ Z_k \quad =\quad  \underset{i\in[b]}{\min} \left( \sqrt{\sum_{j=1}^b \beta_j s_{ij}^2}\underset{k\in \mathcal{I}_i }{\min} \ \check{Z}_k\right)
    &\simeq &\underset{i\in[b]}{\min} \left( \sqrt{\sum_{j=1}^b \beta_j s_{ij}^2}\left(-\sqrt{2\log(n)}\right)\right)\\
    &=&  \underset{i\in[b]}{\min} \left(- \sqrt{\sum_{j=1}^b 2\beta_js_{ij}^2\log(n)}\right)
    \quad =\quad  -\underset{i\in[b]}{\max} \left( \sqrt{\sum_{j=1}^b 2\beta_j s_{ij}^2\log(n)}\right)\, .
\end{eqnarray*}
Following the approximation \eqref{eq:approx_log_multi}, the condition $\underset{k\in[n] }{\min} \ Z_k > -1$ asymptotically boils down to
\begin{eqnarray*}
\underset{i\in[b]}{\max} \left( \sqrt{\sum_{j=1}^b 2\beta_j s_{ij}^2\log(n)}\right) < 1 
    &\Leftrightarrow  & \underset{i\in[b]}{\max} \left( \sum_{j=1}^b 2\beta_j s_{ij}^2 \log(n)\right) < 1 \, , \\
    &\Leftrightarrow  & \underset{i\in[b]}{\max} \left( \sum_{j=1}^b \beta_j s_{ij}^2\right) < \frac{1}{2\log(n)} 
    \quad \Leftrightarrow \quad  \left \| (\bs{s_n}\circ \bs{s_n})^2\bs{\beta}^\top \right \|_{\infty} < \frac{1}{2\log(n)} := (s_n^*)^2 \, .
\end{eqnarray*}

\end{proof}

\end{document}